\documentclass[11pt]{article}
\usepackage{amsmath,amsthm,amssymb,color,verbatim,graphicx,fullpage}
\newcommand{\remove}[1]{}
\sloppy

%

\newtheorem{thm}{Theorem}[section]
\newtheorem{clm}[thm]{Claim}
\newtheorem{lem}[thm]{Lemma}
\newtheorem{define}[thm]{Definition}

\newtheorem{open}[thm]{Open Problem}

\renewcommand{\remove}[1]{}

\newcommand{\eps}{{\varepsilon}}
\renewcommand{\l}{\left}
\renewcommand{\r}{\right}

\newcommand{\comments}[1]{}

\renewcommand{\deg}{\textnormal{deg}}

\newcommand{\Maj}{\textnormal{MAJ}}

\def\F{{\mathbb{F}}}

\newcommand{\R}{\mathbb{R}}
\newcommand{\N}{\mathbb{N}}
\newcommand{\T}{\mathbb{T}}
\newcommand{\E}{\mathbb{E}}
\newcommand{\Z}{\mathbb{Z}}
\newcommand{\U}{\mathbb{U}}

\newcommand{\C}{\mathbb{C}}

\newcommand{\wt}{\mathrm{wt}}

\renewcommand{\Pr}{\mathbf{Pr}}

\newcommand{\MOD}{\mathrm{MOD}}

\def\draft{0}   

\ifnum\draft=1 
    \def\ShowAuthNotes{1}
\else
    \def\ShowAuthNotes{0}
\fi

\ifnum\ShowAuthNotes=0
\newcommand{\authnote}[2]{{ \footnotesize \bf{\color{red}[#1's Note: {\color{blue}#2}]}}}
\else
\newcommand{\authnote}[2]{}
\fi

\begin{document}
\title{Nonclassical polynomials as a barrier to polynomial lower bounds}

\author{
Abhishek Bhowmick\thanks{Department of Computer Science.
The University of Texas at Austin.
\texttt{bhowmick@cs.utexas.edu}. Research supported in part by NSF Grant CCF-1218723.
}
\and
Shachar Lovett \thanks{Department of Computer Science and Engineering. University of California, San Diego. 
\texttt{slovett@ucsd.edu}. Research supported by NSF CAREER award 1350481.}
}

\maketitle

\begin{abstract}
The problem of constructing explicit functions which cannot be approximated by low degree polynomials has been extensively 
studied in computational complexity, motivated by applications in circuit lower bounds, pseudo-randomness, constructions of Ramsey graphs
and locally decodable codes. Still, most of the known lower bounds become trivial for polynomials of super-logarithmic degree. Here, we suggest
a new barrier explaining this phenomenon. We show that many of the existing lower bound proof techniques extend to nonclassical
polynomials, an extension of classical polynomials which arose in higher order Fourier analysis. Moreover, these techniques are 
tight for nonclassical polynomials of logarithmic degree.
\end{abstract}

\section{Introduction}

Polynomials play a fundamental role in computer science with important applications in algorithm design, coding theory, pseudo-randomness, cryptography and complexity theory. They are also instrumental in proving lower bounds, as many lower bounds techniques first reduce the computational model to a computation or an approximation by a low degree polynomial, and then continue to show that certain hard functions cannot be computed or approximated by low degree polynomials. Motivated by these applications, the problem of constructing explicit functions which cannot be computed or approximated (in certain ways) by low degree polynomials has been widely explored in computational complexity. However, most techniques to date apply only to relative low degree polynomials. In this paper, we focus on understanding this phenomenon, when the polynomials are defined over fixed size finite fields. In this regime, many lower
bound techniques become trivial when the degree grows beyond logarithmic in the number of variables. We propose a new barrier explaining the lack of ability to prove strong lower bounds for polynomials of super-logarithmic degree. The barrier is based on \emph{nonclassical polynomials}, an extension of standard (classical) polynomials which arose in higher order Fourier analysis. We show that several existing lower bound techniques extend to nonclassical polynomials, for which the logarithmic degree bound is tight. Hence, to prove stronger lower bounds, one should either focus on techniques which distinguish classical from nonclassical polynomials, or consider functions which are hard also for nonclassical polynomials.

\paragraph{Nonclassical polynomials.}
Nonclassical polynomials were introduced by Tao and Ziegler~\cite{TZ11} in their works on the inverse theorem for the Gowers uniformity norms. To introduce these, it will be beneficial to first consider classical polynomials. Fix a prime finite field $\F_p$, where we consider $p$ to be a constant.
A function $f:\F_p^n \to \F_p$ is a degree $d$ polynomial if it can be written as a linear combination of monomials of degree at most $d$. An equivalent definition is that $f$ is annihilated by taking any $d+1$ directional derivatives. That is, for a direction $h \in \F_p^n$ define the derivative of $f$ in direction $h$ as
$D_h f(x) = f(x+h)-f(x)$. Then, $f$ is a polynomial of degree at most $d$ iff
$$
D_{h_1} \ldots D_{h_{d+1}} f \equiv 0 \qquad \forall h_1,\ldots,h_{d+1} \in \F_p^n.
$$
Nonclassical polynomials extend this definition to a larger class of objects. Let $\T=\R/\Z$ denote the torus. For a function $f:\F_p^n \to \T$, define its directional derivative in direction $h \in \F_p^n$ as before, as
$D_h f(x) = f(x+h)-f(x)$. Then, we define $f$ to be a \emph{nonclassical polynomial of degree at most $d$} if it is annihilated by any $d+1$ derivatives,
$$
D_{h_1} \ldots D_{h_{d+1}} f \equiv 0 \qquad \forall h_1,\ldots,h_{d+1} \in \F_p^n.
$$
While not immediately obvious, the class of nonclassical polynomials contains the classical polynomials. Let $|\cdot|:\F_p \to \{0,\ldots,p-1\} \subset \Z$ denote the natural embedding.
If $f:\F_p^n \to \F_p$ is a classical polynomial of degree $d$ then $|f(x)|/p \pmod{1}$ is a nonclassical polynomial of degree $d$. It turns out that as long as $d<p$, these capture all the nonclassical polynomials. However, for $d \ge p$ nonclassical polynomials strictly extend classical polynomials of the same degree. For example, the following is a nonclassical polynomial of degree $p$:
$$
f(x) = \frac{\sum |x_i|}{p^2} \;.
$$
See Section~\ref{sec:prelim} for more details on nonclassical polynomials.

\paragraph{Correlation bounds for polynomials.}
We first consider the problem of constructing explicit boolean functions which cannot be approximated by low-degree polynomials. For simplicity, we focus on polynomials defined
over $\F_2$, but note that the results below extend to any constant prime finite field. This problem was studied by Razborov~\cite{Raz2} and Smolensky~\cite{Sm} in the context of proving lower bounds for $\mathrm{AC}^0(\oplus)$ circuits (and more generally, bounded depth circuits with modular gates modulo a fixed prime). Consider for example the function $\MOD_3:\{0,1\}^n \to \{0,1\}$, which outputs $1$ if the
sum of the bits is zero modulo $3$, and outputs $0$ otherwise. The probability it outputs $0$ is $2/3$. They showed that low degree polynomials over $\F_2$ cannot improve this significantly. If $f:\F_2^n \to \F_2$ be a polynomial of degree $d$ then
$$
\Pr_{x \in \{0,1\}^n}\left[f(x) = \MOD_3(x)\right] \le \frac{2}{3} + O\left( \frac{d}{\sqrt{n}}\right).
$$
This is sufficient to prove that the $\MOD_3$ function cannot be computed by sub-exponential $\mathrm{AC}^0(\oplus)$ circuits. However, one would like to prove that it cannot even be slightly approximated. Such a result would be a major step towards constructing pseudorandom generators for $\mathrm{AC}^0(\oplus)$ circuits~\cite{Ni,NW}, a well known open problem in circuit complexity. It turns out that the Razborov-Smolensky bound is tight for very large degrees, as there exist polynomials of degree $d=\Omega(\sqrt{n})$ which approximate the $\MOD_3$ function with probability $0.99$, say. However, it seems to be far from tight for $d \ll \sqrt{n}$, which suggests that an alternative proof technique may be needed.

Viola and Wigderson~\cite{VW} proved stronger inapproximability results for degrees $d \ll \log{n}$. These are better described if one considers the correlation of $f$ with the sum of the bits modulo $3$. In the following, let $\omega_3 = \exp(2 \pi i / 3)$ be a cubic root of unity. They showed that if $f:\F_2^n \to \F_2$ is a polynomial of degree $d$ then
$$
\E_{x \in \{0,1\}^n}\left[ (-1)^{f(x)} \omega_3^{x_1+\ldots+x_n} \right] \le 2^{-\Omega(n / 4^d)}.
$$
The technique of~\cite{VW} proves exponential correlation bounds for constant degrees, but decays quickly and becomes trivial at $d=O(\log n)$.
Our first result is that this is because of a good reason. Their technique is based on derivatives, and hence this fact extends to nonclassical polynomials. Moreover, it is tight for nonclassical polynomials. In the following, let $e:\T \to \C^*$ be defined as $e(x) = \exp(2 \pi i x)$.

\begin{thm}[Correlation bounds with modular sums for nonclassical polynomials (informal)]
\label{thm:intro:correlation}
Let $f:\F_2^n \to \T$ be a nonclassical polynomial of degree $d$. Then
$$
\E_{x \in \{0,1\}^n}\left[ e(f(x)) \omega_3^{x_1+\ldots+x_n} \right] \le 2^{-\Omega(n/4^d)}.
$$
Moreover, for any $\eps>0$ there exists a nonclassical polynomial $f:\F_2^n \to \T$ of degree $O(\log (n/\eps))$ such that
$$
\E_{x \in \{0,1\}^n}\left[ e(f(x)) \omega_3^{x_1+\ldots+x_n} \right] \ge 1-\eps.
$$
\end{thm}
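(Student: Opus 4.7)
The plan is to extend the Viola--Wigderson argument, noting that it uses the polynomial structure of $f$ only through the vanishing of $D_{h_1}\cdots D_{h_{d+1}} f$, which is exactly the definition of a degree-$d$ nonclassical polynomial. Set $C(f) = \E_x[e(f(x)) g(x)]$ with $g(x) = \omega_3^{|x|_1} = \prod_i \omega_3^{x_i}$. Cauchy--Schwarz applied to the outer expectation, together with the substitution $x' = x+h$ over $\F_2^n$, yields
\begin{equation*}
|C(f)|^2 \le \E_h \left|\E_x[e(-D_h f(x))\, g(x)\, \overline{g(x+h)}]\right|.
\end{equation*}
Using $|x+h|_1 = |x|_1 + |h|_1 - 2\ip{x}{h}$ over $\F_2$ and the identity $2 \equiv -1 \pmod 3$, one computes $g(x)\overline{g(x+h)} = \omega_3^{-|h|_1} \prod_{i \in \supp(h)} \omega_3^{-x_i}$. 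The scalar prefactor is absorbed by the modulus, and the remaining factor is again a product character $\prod_i \omega_3^{\sigma_i x_i}$ with $\sigma_i \in \{\pm 1, 0\}$, now supported on $\supp(h)$; hence the recursion is self-similar. After $d+1$ rounds with directions $h_1, \ldots, h_{d+1}$, the exponent $\pm D_{h_{d+1}}\cdots D_{h_1} f$ vanishes identically, leaving
\begin{equation*}
|C(f)|^{2^{d+1}} \le \E_{h_1,\ldots,h_{d+1}}[2^{-|S|}], \qquad S = \supp(h_1)\cap\cdots\cap\supp(h_{d+1}).
\end{equation*}
Each coordinate lies in $S$ independently with probability $2^{-(d+1)}$, so the right-hand side factorizes coordinatewise to $(1 - 2^{-(d+2)})^n \le \exp(-n/2^{d+2})$, giving $|C(f)| \le 2^{-\Omega(n/4^d)}$.

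\textbf{Explicit construction for near-tightness.} The plan is to approximate $-|x|_1/3 \pmod 1$ by a dyadic multiple of $|x|_1$. The key input from Section~\ref{sec:prelim} is that each monomial $x_i / 2^r \pmod 1$ is a nonclassical polynomial of degree $r$ (a canonical basis element of depth $r-1$), and that integer combinations of degree-$r$ nonclassical polynomials remain of degree $\le r$. Choose $r = O(\log(n/\eps))$ and let $m \in \Z$ be the integer closest to $-2^r/3$, so that $\delta := m/2^r + 1/3 \in \R$ satisfies $|\delta| \le 2^{-(r+1)}$. Define $f(x) = (m |x|_1 / 2^r + c) \pmod 1$ with $c$ a constant to be fixed. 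Then
\begin{equation*}
\E_x[e(f(x)) \omega_3^{|x|_1}] = e(c) \prod_i \E_{x_i}[e(x_i \delta)] = e(c + n\delta/2) \cos^n(\pi\delta).
\end{equation*}
Setting $c = -n\delta/2 \pmod 1$ (a degree-$0$ shift) makes the correlation equal to $\cos^n(\pi\delta) \ge (1 - \tfrac12 \pi^2 \delta^2)^n \ge 1 - \tfrac12 \pi^2 n \delta^2 \ge 1 - \eps$ once $r \gtrsim \tfrac12 \log_2(n/\eps)$, so $f$ has the required degree.

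\textbf{Main obstacle.} The delicate step is the first one: verifying that each round of Cauchy--Schwarz preserves the product-character form $\prod_i \omega_3^{\sigma_i x_i}$ with $\sigma_i \in \{\pm 1, 0\}$ and a support that shrinks to an intersection of $\supp(h_j)$'s. This hinges on the lucky algebraic fact that $2 \equiv -1 \pmod 3$, so that the doubling that arises from $|x+h|_1 - |x|_1$ over $\F_2$ keeps each coordinatewise exponent in $\{\pm 1\}$; the general-$\F_p$ version of the theorem requires a corresponding mild condition on the modulus. The remainder of both parts is a routine calculation once the canonical form of nonclassical polynomials is in hand, which in particular makes the degree of $x_i/2^r$ transparent.
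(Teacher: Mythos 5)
Your proof is correct and follows essentially the same route as the paper: the upper bound is the Viola--Wigderson argument, which the paper phrases via the standard properties of the Gowers $U^{d+1}$ norm (monotonicity, invariance under polynomial phases, multiplicativity over coordinates) and which you obtain by unrolling the same iterated Cauchy--Schwarz by hand; both hinge only on $D_{h_1}\cdots D_{h_{d+1}}f\equiv 0$, so the extension to nonclassical polynomials is automatic. Your construction is likewise the paper's: approximate $-a|x|_1/m$ by a dyadic multiple $A|x|_1/2^k$ of degree $k=O(\log(n/\eps))$ (your centered rounding and exact $\cos^n(\pi\delta)$ evaluation even shave the constant in the degree to roughly $\tfrac12\log_2(n/\eps)$).
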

So, the Viola-Wigderson technique is bounded for degrees smaller than $O(\log n)$, because it extends to nonclassical polynomials of that degree, for which it is tight. We note that the modulus $3$ in Theorem~\ref{thm:intro:correlation} can be replaced with any fixed odd modulus.

Another boolean function which was shown by Razborov and Smolensky~\cite{Raz2,Sm} to be hard for $\mathrm{AC}^0(\oplus)$ circuits is the majority function $\Maj:\F_2^n \to \F_2$. The proof relies on the following key fact. If $f:\F_2^n \to \F_2$ is a degree $d$ polynomial then
 \begin{equation}\label{eq:razsmo}
\Pr_{x \in \{0,1\}^n}\left[f(x)=\Maj(x)\right]  \leq  \frac{1}{2} + O\left( \frac{d}{\sqrt{n}}\right).
\end{equation}
Equivalently, this can be presented as a correlation bound
$$
\E_{x \in \{0,1\}^n}\l[(-1)^{f(x)} (-1)^{\Maj(x)}\r] \le O\left( \frac{d}{\sqrt{n}}\right).
$$
This is known to be tight for degree $d=1$ (as say $x_1$ has correlation $\Omega(1/\sqrt{n})$ with the majority function) and also for $d=\Omega(\sqrt{n})$, since there exist polynomials of that degree which approximate well the majority function, or any symmetric function for that matter. However, it is not known if these bounds are tight for degrees $1 \ll d \ll \sqrt{n}$. We study this question for nonclassical polynomials. We show that there are nonclassical polynomials of degree $O(\log n)$ with a constant correlation with the majority function.
\begin{thm}[Correlation bounds with majority for nonclassical polynomials (informal)]\label{thm:intro:majcor}
There exists a nonclassical polynomial $f:\F_2^n \to \T$ of degree $O(\log n)$ such that 
$$
\l|\E\l[e(f(x))(-1)^{\Maj(x)}\r]\r| \ge \Omega(1).
$$
\end{thm}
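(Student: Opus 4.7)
The plan is to take $f$ to be a scaled Hamming weight, with the scale chosen so that the oscillation wavelength of $e(f)$ matches the standard deviation of the binomial distribution. Let $|\cdot|:\F_2\to\{0,1\}\subset\Z$ denote the natural lift, and pick the integer $k$ with $2^k\in[\sqrt n,2\sqrt n]$, so $k=\tfrac12\log_2 n+O(1)$. Consider
$$
f(x)\;=\;\frac{|x_1|+|x_2|+\cdots+|x_n|}{2^k}\pmod 1.
$$
First I would verify that $\deg f=k=O(\log n)$. An induction on $k$ shows that the single-variable function $|y|/2^k$ is a nonclassical polynomial of degree exactly $k$: one computes $D_1(|y|/2^k)=1/2^k-|y|/2^{k-1}\pmod 1$, so each discrete derivative drops the degree by one. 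Since nonclassical polynomials of degree at most $k$ are closed under addition, $f$ has degree $k$.

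Next I would reduce the correlation to a one-dimensional sum. Both $e(f(x))$ and $(-1)^{\Maj(x)}$ depend on $x$ only through $S(x):=\sum_i|x_i|\sim\mathrm{Binomial}(n,1/2)$. For odd $n$ there are no ties, so $(-1)^{\Maj(x)}=-\mathrm{sign}(S-n/2)$; pairing $s$ with $n-s$ using $\binom{n}{s}=\binom{n}{n-s}$ gives
$$
\left|\E_x\!\left[e(f(x))(-1)^{\Maj(x)}\right]\right|\;=\;2\left|\E_S\!\left[\sin\!\left(\tfrac{2\pi(S-n/2)}{2^k}\right)\mathbf 1[S<n/2]\right]\right|.
$$

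Finally I would invoke the central limit theorem. Setting $c:=2^k/\sqrt n\in[1,2]$ and $Z_n:=(S-n/2)/(\sqrt n/2)$, so that $Z_n\Rightarrow Z\sim\mathcal N(0,1)$, Berry--Esseen applied to the bounded-variation function $z\mapsto\sin(\pi z/c)\mathbf 1[z<0]$ yields
$$
\E_S\!\left[\sin(\pi Z_n/c)\mathbf 1[Z_n<0]\right]\;=\;\int_{-\infty}^{0}\sin(\pi z/c)\phi(z)\,dz\;+\;O(n^{-1/2}).
$$
A contour shift of $\int_0^\infty e^{i\pi z/c-z^2/2}\,dz$ gives
$\int_{-\infty}^{0}\sin(\pi z/c)\phi(z)\,dz=-\frac{e^{-\pi^2/(2c^2)}}{\sqrt{2\pi}}\int_0^{\pi/c}e^{t^2/2}\,dt$, which is an explicit nonzero constant for every $c>0$, continuous in $c$, and hence bounded away from zero uniformly on the compact range $[1,2]$. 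Combining these bounds gives $|\E_x[e(f(x))(-1)^{\Maj(x)}]|=\Omega(1)$.

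The only substantive point, and the one I would take most care over, is the scale-matching. Had we chosen $2^k=o(\sqrt n)$, the phase $f(x)$ would wind many times over the bulk of $S$ and oscillations would average out, while $2^k=\omega(\sqrt n)$ would make the phase essentially linear in $S-n/2$ and the Gaussian limit would vanish to leading order. With $2^k\asymp\sqrt n$ the CLT produces an honest nonzero Gaussian integral, and verifying its non-vanishing for $c\in[1,2]$ reduces to the one-line contour evaluation above.
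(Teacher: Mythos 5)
Your construction and reduction are essentially the paper's: both take $f$ to be the Hamming weight divided by a power of $2$ chosen so that the wavelength of $e(f)$ is $\Theta(\sqrt n)$ (the paper uses $A(\sum|x_i|-n/2)/2^k$ with $2^k\ge n$ and $A=\lfloor a\sqrt n\rfloor$, which is the same thing up to a constant shift and the value of the constant $c=2^k/(A)$ per $\sqrt{n}$), and both reduce the correlation, by the symmetry $\binom{n}{s}=\binom{n}{n-s}$, to showing that $2^{-n}\sum_{u>0}\binom{n}{n/2+u}\sin(2\pi u/2^k)=\Omega(1)$. Where you differ is the final estimate. The paper chooses the scale so that the phase stays in $[0,\pi/4]$ over the bulk $u\le b\sqrt n$, so every term is nonnegative and $\sin x\ge x/2$ plus a Chernoff bound and standard binomial estimates finish elementarily. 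You instead take $c=2^k/\sqrt n\in[1,2]$, where the sine changes sign inside the bulk, and you must genuinely evaluate the Gaussian limit $\int_{-\infty}^0\sin(\pi z/c)\phi(z)\,dz=-\tfrac{1}{\sqrt{2\pi}}e^{-\pi^2/(2c^2)}\int_0^{\pi/c}e^{t^2/2}\,dt$, which is correct and nonzero. This buys you a cleaner closed form and a slightly smaller degree ($\approx\tfrac12\log_2 n$ versus $\log_2 n+1$), at the cost of a genuinely needed special-function identity. One small repair: $z\mapsto\sin(\pi z/c)\mathbf 1[z<0]$ is \emph{not} of bounded variation on $\R$ (its variation over $(-\infty,0)$ is infinite), so Berry--Esseen does not apply as stated; you should first truncate to $|Z_n|\le T$ for a large constant $T$, bounding the tails by Chernoff and the Gaussian tail, and apply Berry--Esseen to the truncated function, whose variation is $O(T)$. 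This gives error $O(Tn^{-1/2})+O(e^{-T^2/2})$, which suffices. With that fix the argument is complete.
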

So, the Razborov-Smolensky technique separates classical from nonclassical polynomials, since classical polynomials of degree $O(\log n)$ have negligible correlation with the majority function, while as we show above, this is false for nonclassical polynomials.

\paragraph{Exact computation by polynomials.}
A related problem to correlation bounds is that of exact computation with good probability. For classical polynomials the two problems are equivalent, but this is not the case for nonclassical polynomials. Given a nonclassical polynomial $f:\F_2^n \to \T$,  we can ask what is the probability that $f$ is equal to a boolean function, say the majority function. To do so, we identify naturally $\F_2$ with $\{0,1/2\} \subset \T$, and consider $\Maj:\F_2^n \to \{0,1/2\}$.  We show the following result, which gives a partial answer to the question. 

\begin{thm}[Exact computation of majority by nonclassical polynomials (informal)]\label{thm:intro:majexact}
Let $f:\F_2^n \to \T$ be a nonclassical polynomial of degree $d$. Then,
$$
\Pr_{x \in \{0,1\}^n}[f(x)=\Maj(x)]\leq \frac{1}{2}+O\l(\frac{d 2^d}{\sqrt{n}}\r).
$$
\end{thm}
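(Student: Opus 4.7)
The plan is to reduce the claim to a classical Razborov--Smolensky bound by extracting a classical $\F_2$-polynomial hidden inside the nonclassical polynomial $f$.

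First, exploit the structure of $f$. Since $f$ has nonclassical degree $d$, its values lie in $\frac{1}{2^d}\Z/\Z$. Define $F: \F_2^n \to \Z/2^d\Z$ by $F(x) = 2^d f(x) \bmod 2^d$, and write its binary decomposition $F(x) = \sum_{j=0}^{d-1} 2^j b_j(x)$ with $b_j: \F_2^n \to \{0,1\}$. I will show that each $b_j$ is a classical polynomial over $\F_2$ using the identity
$$
\frac{b_j(x)}{2} \;\equiv\; 2^{d-j-1} f(x) \;-\; \sum_{i<j} \frac{b_i(x)}{2^{j+1-i}} \pmod{1},
$$
together with the fact that $2^{d-j-1} f$ is itself a nonclassical polynomial of degree at most $j+1$ (a standard consequence of the Tao--Ziegler canonical form, since multiplication by $2$ kills depth-$0$ terms and drops all other $|S|+k$ values by one). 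Inductively, the right-hand side is a nonclassical polynomial and its values lie in $\{0, 1/2\}$, so it corresponds to a classical $\F_2$-polynomial $b_j/2$; its classical degree equals its nonclassical degree, which can be read off from this expression.

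Next, use that $\Maj(x) \in \{0, 1/2\}$. The event $f(x) = \Maj(x)$ forces $F(x) \in \{0, 2^{d-1}\}$, equivalently $b_0(x) = \cdots = b_{d-2}(x) = 0$ and $b_{d-1}(x) = M(x)$, where $M: \F_2^n \to \{0,1\}$ is the Boolean majority. Therefore
$$
\Pr[f(x) = \Maj(x)] \;\le\; \Pr[b_{d-1}(x) = M(x)].
$$
Now apply the classical Razborov--Smolensky bound to the $\F_2$-polynomial $b_{d-1}$: if $D$ denotes its classical degree, then $\Pr[b_{d-1} = M] \le 1/2 + O(D/\sqrt{n})$. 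Combining with a bound $D \le O(d \cdot 2^d)$ yields the claim.

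The main obstacle is proving the bound $D \le O(d\cdot 2^d)$, and this is where the extra factor compared to the classical setting arises. The simple example $f(x) = |x|/2^d$ already gives $b_{d-1}(x) = \binom{|x|}{2^{d-1}} \bmod 2 = \sigma_{2^{d-1}}(x) \bmod 2$, whose classical $\F_2$-degree is exactly $2^{d-1}$; so exponential dependence on $d$ is inherent and cannot be removed. For a general $f$, the bits $b_i$ for $i<d-1$ have integer multilinear representations of high degree, and dividing each $b_i$ by the power $2^{j+1-i}$ before subtracting from $2^{d-j-1} f$ to extract $b_j$ propagates high-degree corrections upward. A careful iterative analysis -- tracking, at each of the $d$ extraction levels, how $\F_2$-degrees can at worst double (or otherwise blow up polynomially in the previously extracted bits' degrees) -- yields the $O(d \cdot 2^d)$ bound on $D$ and hence the theorem.
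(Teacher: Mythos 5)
Your overall architecture is exactly the paper's: extract from $f$ a classical $\F_2$-polynomial that determines whether $f(x)=\Maj(x)$, bound its degree by $O(d2^d)$, and invoke Razborov--Smolensky. The reduction $\Pr[f=\Maj]\le\Pr[b_{d-1}=M]$ is correct, and so is the observation that $2^{d-j-1}f$ has nonclassical degree at most $j+1$. But the crux of the theorem is the degree bound on the extracted classical polynomial, and that is precisely the step you do not prove. Your inductive identity
$$
\frac{b_j(x)}{2} \equiv 2^{d-j-1}f(x) - \sum_{i<j}\frac{b_i(x)}{2^{j+1-i}} \pmod 1
$$
only lets you conclude that $b_j/2$ is a nonclassical polynomial of bounded degree if you already know that each $b_i(x)/2^{j+1-i}\pmod 1$ is one. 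Knowing inductively that $b_i$ is a classical polynomial of degree $D_i$ does \emph{not} immediately give this: the map ``divide the representative by a power of $2$'' is not an algebraic operation on nonclassical polynomials, and the integer multilinear representation of $|b_i|$ can have degree as large as $n$ (e.g.\ $b_i=x_1+\cdots+x_n \bmod 2$), so one must actually prove that $|P|/2^m\pmod 1$ is a nonclassical polynomial of degree controlled by $\deg P$ and $m$. This is a genuine lemma of the same difficulty as the whole extraction step, and your proposal replaces it with the phrase ``a careful iterative analysis \ldots\ yields the $O(d\cdot 2^d)$ bound,'' which is exactly the content that needs proving.

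The paper sidesteps this circularity. Its Lemma~\ref{lemma:nonclassical_to_classical} writes $f(x)=\frac{M_1(x)+\cdots+M_t(x)}{2^k}\pmod 1$ on the hypercube, where the $M_i$ are (repeated) monomials of degree at most $d$, and then extracts the $i$-th binary digit of the integer $\sum M_j(x)$ as $S_{2^i}(M_1,\ldots,M_t)\bmod 2$ via Lucas' theorem; this digit therefore has classical degree at most $2^i d$, and composing with an arbitrary function of the $k$ digits gives degree at most $(2^k-1)d\le O(d2^d)$ since $k\le d$ over $\F_2$. If you want to salvage your bit-by-bit route, you should prove the missing lemma about $|P|/2^m$ directly (the symmetric-polynomial device is the natural tool for that too); as stated, your argument has a gap at its load-bearing step. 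One further small point: you should also dispose of the shift of $f$ (if the constant term $\alpha$ of $f$ does not lie in $\frac{1}{2^d}\Z/\Z$, the probability in question is $0$; otherwise absorb it), which the paper handles by its blanket shift-zero convention.
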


We believe that the bound is not tight, and that, unlike for correlation bounds, nonclassical polynomials should not be able to exactly compute boolean functions better than classical polynomials. Specifically, we ask the following problem.

\begin{open}
Let $f:\F_2^n \to \T$ be a nonclassical polynomial of degree $d$. Show that
$$
\Pr_{x \in \{0,1\}^n}[f(x)=\Maj(x)]\leq \frac{1}{2}+O\l(\frac{d}{\sqrt{n}}\r).
$$
\end{open}

\paragraph{Weak representation of the OR function.}
We next move to the problem of weak representation of the OR function. Let $p_1,\ldots,p_r$ be distinct primes and let $m=p_1 \ldots p_r$. The goal is to construct a low degree polynomial
$f \in \Z_m[x_1,\ldots,x_n]$ such that $f(0^n)=0$ but $f(x) \ne 0$ for all nonzero $x \in \{0,1\}^n$. Such polynomials stand
at the core of some of the best constructions of Ramsey graphs~\cite{FW, Gro, Gop}\footnote{The current record is due to \cite{BRSW} which uses different techniques.}  and locally decodable codes \cite{Yek, Efr, DGY, BDL, DH}, and were further investigated in~\cite{Sm, Bar, BeTa, BeGi,BBR, BT}. There are currently exponential gaps between the best constructions and lower bounds. Barrington, Beigel and Rudich \cite{BBR} showed that there exist polynomials of degree $O(n^{1/r})$ that weakly represent the OR function. The best lower bound is $\Omega(\log^{1/(r-1)} n)$, due to Barrington and Tardos \cite{BT}.

The definition of weak representation can be equivalently defined (via the Chinese Remainder Theorem) as follows. There exist polynomials $f_i:\F_{p_i}^n \to \F_{p_i}$ for $i=1,\ldots,r$ such that $f_1(0^n)=\ldots=f_r(0^n)=0$ but for any nonzero $x \in \{0,1\}^n$, there exists an $i$ for which $f_i(x) \ne 0$. This definition can be naturally extended to nonclassical polynomials, where we consider $f_i:\F_{p_i}^n \to \T$. We show that the Barrington-Tardos lower bound extends to nonclassical polynomials, and it is tight up to polynomial factors.

\begin{thm}[Weak representation of OR for nonclassical polynomials (informal)]
\label{thm:intro:weak}
Let $p_1,\ldots,p_r$ be distinct primes,
and $f_i:\F_{p_i}^n \to \T$ be nonclassical polynomials which weakly represent the OR function. Then
$$
\max \deg(f_i) \ge \Omega(\log^{1/r} n).
$$
Moreover, for any fixed prime $p$, there exists a nonclassical polynomial $f:\F_p^n \to \T$ of degree $O(\log n)$ which weakly represents the OR function.
\end{thm}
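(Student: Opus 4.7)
}

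\textbf{Upper bound.} The construction is explicit. Set $k = \lceil \log_p(n+1) \rceil$ and define
\[
f(x) = \frac{|x_1| + |x_2| + \cdots + |x_n|}{p^{k+1}} \pmod 1,
\]
where $|\cdot|:\F_p \to \{0,1,\ldots,p-1\}$ is the natural lift. Restricted to $\{0,1\}^n$, the numerator equals the Hamming weight $|x| \in \{0,\ldots,n\}$, which lies strictly below $p^{k+1}$; hence $f(x) \equiv 0 \pmod 1$ iff $x = 0^n$, so $f$ weakly represents OR on its own (we may take $r=1$ factors here). The remaining points---that $f$ is nonclassical and $\deg(f) = k(p-1)+1 = O(\log n)$ for constant $p$---both follow from the structural description recalled in Section~\ref{sec:prelim}: the single-variable map $y \mapsto |y|/p^{k+1} \pmod 1$ on $\F_p$ is nonclassical of degree $k(p-1)+1$, and nonclassical polynomials are closed under addition without any increase in degree.

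\textbf{Lower bound.} My plan is to lift the Barrington--Tardos combinatorial argument to the nonclassical setting. First I would symmetrize each $f_i$ by averaging over the action of $S_n$ on coordinates. Because the OR function is symmetric, the symmetrized tuple $(\tilde f_1,\ldots,\tilde f_r)$ still weakly represents OR, and each $\tilde f_i$ remains a nonclassical polynomial of the same degree $d_i = \deg(f_i)$. Each $\tilde f_i$ then descends to a univariate function $G_i:\{0,1,\ldots,n\} \to \T$ of Hamming weight, with $G_i(0)=0$ for all $i$ yet for every $k \in [n]$ some $G_i(k)\ne 0$. The key structural step is to show that a symmetric nonclassical polynomial of degree $d$ over $\F_{p_i}$ restricted to $\{0,1\}^n$ produces a $G_i$ whose zero set in $\{0,\ldots,n\}$ is a union of arithmetic progressions of period $p_i^{O(d)}$. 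I would extract this from the Tao--Ziegler expansion of nonclassical polynomials in the basis $\binom{|x|}{j}/p_i^{\ell+1} \pmod 1$ with $j+\ell(p_i-1) \le d$, together with Lucas' theorem, which controls the classical $p_i$-adic behavior of $\binom{k}{j}$.

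Given the structural step, a pigeonhole/density argument closes the bound: the joint zero set $\bigcap_{i} G_i^{-1}(0) \cap \{0,\ldots,n\}$ must equal $\{0\}$, and the $r$ individual zero sets have periodic structure of period $p_i^{O(d_i)}$, so the product of their ``per-period complexities'' must exceed $n$. Optimizing this constraint among $r$ factors yields $\max_i \deg(f_i) \ge \Omega(\log^{1/r} n)$. The principal obstacle is the structural step itself: classical symmetric polynomials on $\{0,1\}^n$ have a clean interpretation as $\F_{p_i}$-linear combinations of $\binom{|x|}{j}$ whose zero sets are controlled directly by Lucas' theorem, but nonclassical polynomials introduce additional fractional terms valued in $\T$, and I will need to track carefully how these extra ``levels'' interact with the $p_i$-adic digits of $|x|$. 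It is precisely this extra flexibility that accounts for the slight loss from the classical exponent $1/(r-1)$ down to $1/r$.
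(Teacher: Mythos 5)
Your upper bound is essentially the paper's construction (the paper uses denominator $p^k$ with $p^k>n$ rather than $p^{k+1}$, but this is immaterial), and the degree accounting via the Tao--Ziegler characterization is correct.

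The lower bound plan, however, breaks at its very first step. Symmetrizing $f_i$ by averaging over the $S_n$-action does not preserve weak representation of OR: the defining property is that $f_i(x)\ne 0$ for some $i$ at each nonzero $x$, and an average of nonzero elements of $\T$ can perfectly well be zero (quite apart from the fact that dividing by $n!$ in $\T$ is not even well defined). This is not a technicality one can route around --- it is precisely the obstruction that separates the symmetric case from the general case in the classical literature. For \emph{symmetric} polynomials, Lucas-type arguments of the kind you describe give lower bounds of order $n^{1/r}$ (matching the BBR upper bound up to the modulus), whereas for general polynomials the best known bound is the exponentially weaker $\Omega(\log^{1/(r-1)}n)$ of Barrington--Tardos. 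If symmetrization worked, your periodicity-plus-pigeonhole argument would yield a bound like $\sum_i d_i\log p_i\ge\Omega(\log n)$ or better, far stronger than $\Omega(\log^{1/r}n)$ --- a signal that the reduction cannot be valid. Your closing remark attributes the exponent $1/r$ (versus $1/(r-1)$) to the fractional levels of nonclassical polynomials; in fact the $\log^{1/r}$ shape comes from the iterated-restriction structure of the argument, not from the arithmetic of $\T$.

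What is actually needed is the Barrington--Tardos restriction machinery, adapted to the nonclassical setting. The paper first proves a translation lemma: a nonclassical polynomial of degree $d$ and depth $\le k-1$ agrees on $\{0,1\}^n$ with $\varphi\circ f$ for a classical polynomial of degree $\le(p^k-1)d$, for any $\varphi:\U_{p,k}\to\F_p$. From this one gets that a single low-degree polynomial vanishing at $0^n$ has another boolean root, and that several such polynomials have a common nonzero root; combined with the fact that the Hamming ball $B(n,d)$ is an interpolating set for degree-$d$ nonclassical polynomials on the cube, one shows each $f_i$ vanishes identically on a large combinatorial cube $\{\sum y_jS_j : y\in\{0,1\}^\ell\}$. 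Iterating this over $i=1,\ldots,r$, nesting each cube inside the previous one, produces a common nonzero root of all $f_i$ unless $n\le(2dp^k)^{(d+1)^{r-1}}$; bounding $p^k\le 2^dp$ via the depth--degree relation gives $d\ge\Omega((\log_p n)^{1/r})$. If you want to pursue your own route, you would need to replace symmetrization with some degree-preserving operation that provably keeps the nonvanishing property on all of $\{0,1\}^n\setminus\{0^n\}$, and no such operation is known.
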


Thus, the proof technique of Barrington-Tardos cannot extend beyond degree $O(\log n)$, as it applies to nonclassical polynomials as well, for which the $O(\log n)$ bound holds even for prime modulus. We note that unlike in the case of Theorem~\ref{thm:intro:correlation}, where the lower bound proof of~\cite{VW} extended naturally to nonclassical polynomials, extending the lower bound technique of~\cite{BT} to nonclassical polynomials requires several nontrivial modifications of the original proof.

As an aside, in the classical setting, we present an improvement in the degree of a symmetric polynomial that weakly represents OR. This improves the result in \cite{BBR} in the growing modulus case and constructs a polynomial whose degree is modulus independent. For more details, see Appendix~\ref{app:or}.

\paragraph{Pseudorandom generators for low degree polynomials.}
Consider for simplicity polynomials over $\F_2$. A distribution $D$ over $\F_2^n$ is said to fool polynomials of degree $d$ with error $\eps$, if for any polynomial $f:\F_2^n \to \F_2$ of degree at most $d$, we have
$$
\left| \Pr_{x \sim D}[f(x)=0] - \Pr_{x \in \F_2^n}[f(x)=0] \right| \le \eps.
$$
Distributions which fool linear functions (e.g. $d=1$) are called small bias generators, and optimal constructions of them (up to polynomial factors) were given in~\cite{naor1993small,alon1992simple}, with seed length $O(\log n/\eps)$. A sequence of works~\cite{BV,lovett2009unconditional,viola2009sum} showed that small
bias generators can be combined to yield generators for larger degree polynomials. The best construction to date is by Viola~\cite{viola2009sum}, who showed that the sum of $d$ independent small bias generators with error approximately $\eps^{2^d}$ fools degree $d$ polynomials with error $\eps$. Thus, his construction has seed length $O(2^d \log(1/\eps) + d \log n)$, and becomes trivial for $d=\Omega(\log n)$. It is not clear whether it is necessary to require the small bias generators to have smaller error than the required error for the degree $d$ polynomials, and this is the main source for the loss in parameters when considering large degrees.

There is a natural extension of these definitions to nonclassical polynomials. If $f:\F_2^n \to \T$ is a nonclassical polynomial of degree $d$, then we require that
$$
\left| \E_{x \sim D}[e(f(x))] - \E_{x \in \F_2^n}[e(f(x))] \right| \le \eps.
$$
The proof technique of Viola is based on derivatives, and we note here (without proof) that it extends to nonclassical polynomials in a straightforward way. We suspect that it
is tight for nonclassical polynomials, however we were unable to show that. Thus, we raise the following open problem.

\begin{open}
Fix $\eps>0, d \ge 1$. Does there exist a small bias generator with error $\gg \eps^{2^d}$, such that the sum of $d$ independent copies of the generator does not fool degree $d$ nonclassical polynomials with error $\eps$?
\end{open}

\subsection{Organisation}
We start with some preliminaries in Section~\ref{sec:prelim}. In Section~\ref{sec:mod}, we prove the bounds on approximation of modular sums by nonclassical polynomials. Next, in Section~\ref{sec:maj}, we analyze the approximation of the majority function by nonclassical polynomials in the correlation model and the exact computation model. We prove the results on the weak representation of the OR function in Section~\ref{sec:or}. We describe in Appendix~\ref{app:or} an improvement in the degree of classical polynomials which weakly represent the OR function.

\paragraph{Acknowledgement.}We thank Parikshit Gopalan for fruitful discussions that led to the result on the classical OR representation in Appendix~\ref{app:or}. The first author would also like to thank his advisor, David Zuckerman, for his guidance and encouragement.

\section{Preliminaries}\label{sec:prelim}

Let $\N=\{1,2,\ldots\}$ denote the set of positive integers. For $n \in \N$, let $[n]:=\{1,2,\ldots , n\}$. Let $\T=\R/\Z$ denote the torus. This is an abelian group under addition. Let $e:\T  \to \C^*$ be defined by $e(x)=\exp(2\pi i x)$.

\paragraph{Nonclassical polynomials.}
Let $\F_p$ be a prime finite field. Given a function $f:\F_p^n \to \T$, its directional derivative in direction $h \in \F_p^n$ is $D_h f:\F_p^n \to T$, given by
$$
D_h f(x) = f(x+h) - f(x).
$$
Polynomials are defined as functions which are annihilated by repeated derivatives.

\begin{define}[Nonclassical polynomials]
A function $f:\F_p^n \to \T$ is a polynomial of degree at most $d$ if $D_{h_1} \ldots D_{h_{d+1}} f \equiv 0$ for any $h_1,\ldots,h_{d+1} \in \F_p^n$.
The degree of $f$ is the minimal $d$ for which this holds.
\end{define}

Classic polynomials satisfy this definition. Let $|\cdot|$ denote the natural map from $\F_p$ to $\{0,1,\ldots , p-1\} \subseteq \Z$. If $P:\F_p^n \to \F_p$ is a (standard) polynomial of degree $d$,
then $f(x) = |P(x)| / p \pmod{1}$ is a nonclassical polynomial of degree $d$. For degrees $d \le p$, it turns out that these are the only possible polynomials. However, when $d>p$, there are more polynomials
than just these arising from the classical ones, from which the term \emph{nonclassical polynomials} arise. A complete characterization of nonclassical polynomials was developed by Tao and Ziegler~\cite{TZ11}.
They showed that a function $f:\F^n \to \T$ is a polynomial of degree $\le d$ if and only if it has the following form:
$$
f(x_1,\ldots,x_n) = \alpha + \sum_{0 \le e_1,\ldots,e_n \le p-1, k \ge 0: \sum e_i + (p-1)k \le d} \frac{c_{e_1,\ldots,e_n,k} |x_1|^{e_1} \ldots |x_n|^{e_n}}{p^{k+1}} \pmod{1}.
$$
Here, $\alpha \in \T$ and $c_{e_1,\ldots,e_n,k} \in \{0,1,\ldots,p-1\}$ are uniquely determined. The coefficient $\alpha$ is called the \emph{shift} of $f$, and the largest $k$ for which $c_{e_1,\ldots,e_n,k} \ne 0$ for some $e_1,\ldots,e_n$ is called the \emph{depth} of $f$. Classical polynomials correspond to polynomials with $0$ shift and $0$ depth. In this work, we assume without loss of generality that all polynomials have shift $0$.
Define $\U_{p,k}:=\frac{1}{p^k} \Z / \Z$ which is a subgroup of $\T$. Then, the image of polynomials of depth $k-1$ lie in $\U_{p,k}$. We prove the following lemma which shows that nonclassical polynomials can be ``translated" to classical polynomials of a somewhat higher degree, at least if we restrict our attention to boolean inputs.

\begin{lem}\label{lemma:nonclassical_to_classical}
Let $f:\F_p^n \to \T$ be a polynomial of degree $d$ and depth $\le k-1$. Let $\varphi:\U_{p,k} \to \F_p$ be any function. Then there exists a classical polynomial $g:\F_p^n \to \F_p$ 
of degree at most $(p^k-1) d$, such that
$$
g(x) = \varphi(f(x)) \qquad \forall x \in \{0,1\}^n.
$$
\end{lem}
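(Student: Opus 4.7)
The plan is to reduce computing $\varphi(f(x))$ to extracting the base-$p$ digits of $p^k f(x)$, each of which I will express as a binomial-coefficient polynomial of controlled classical degree over $\F_p$, and then interpolate $\varphi$ polynomially in these digits. First, by the Tao--Ziegler characterization we may write, on boolean inputs, $p^k f(x) \equiv Q(x) \pmod{p^k}$, where
$$Q(x) = \sum_{k'=0}^{k-1}\sum_{e\,:\,\sum e_i + (p-1)k' \leq d} p^{\,k-k'-1}\,c_{e,k'} \prod_{i\,:\,e_i\geq 1} x_i$$
is an integer polynomial with nonnegative coefficients and total degree at most $d$ on $\{0,1\}^n$ (since $|x_i|^{e_i} = x_i$ for boolean $x_i$ with $e_i \geq 1$). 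Set $T(x) := Q(x) \bmod p^k \in \{0,1,\ldots,p^k-1\}$, so that $f(x) = T(x)/p^k \pmod 1$; writing the base-$p$ expansion $T(x) = \sum_{j=0}^{k-1} p^j y_j(x)$ with $y_j(x) \in \F_p$, the composition $\varphi \circ f$ becomes $\psi(y_0(x), \ldots, y_{k-1}(x))$ for some $\psi : \F_p^k \to \F_p$.

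The main step is to show that each $y_j$, restricted to $\{0,1\}^n$, coincides with a classical polynomial over $\F_p$ of degree at most $p^j d$. Since $Q(x) \geq 0$ and $Q(x) \bmod p^k$ shares its first $k$ base-$p$ digits with $Q(x)$, Lucas's theorem yields $y_j(x) \equiv \binom{Q(x)}{p^j} \pmod p$. Writing $H(x) := \binom{Q(x)}{p^j} = N(x)/(p^j)!$ with $N(x) := \prod_{\ell=0}^{p^j-1}(Q(x) - \ell)$, the polynomial $N$ is an integer polynomial of total degree at most $p^j d$, and reducing modulo $x_i^2 = x_i$ gives a multilinear expansion $\sum_S t_S \prod_{i \in S} x_i$ with $t_S \in \Z$ supported on $|S| \leq p^j d$. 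Since $H$ is integer-valued on $\{0,1\}^n$, by M\"obius inversion its multilinear representation $\sum_S h_S \prod_{i \in S} x_i$ also has $h_S \in \Z$; and uniqueness of multilinear expansion forces $t_S = (p^j)!\, h_S$, so $h_S$ is supported on the same set $|S| \leq p^j d$. Reducing mod $p$ gives $y_j$ as a classical polynomial of degree $\leq p^j d$ over $\F_p$.

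Finally, every $\psi : \F_p^k \to \F_p$ can be written as $\psi(y_0,\ldots,y_{k-1}) = \sum_{\vec a \in \{0,\ldots,p-1\}^k} b_{\vec a} \prod_{j=0}^{k-1} y_j^{a_j}$, so substituting the polynomial representations of the $y_j$ yields a classical polynomial $g(x)$ of degree at most
$$\max_{\vec a} \sum_{j=0}^{k-1} a_j \cdot p^j d \;\leq\; (p-1) d \sum_{j=0}^{k-1} p^j \;=\; (p^k - 1) d,$$
with $g(x) = \varphi(f(x))$ on $\{0,1\}^n$, as required. The main obstacle is the degree bound on the digits: verifying that the multilinear expansion of $H = N/(p^j)!$ inherits the support $|S| \leq p^j d$ from that of $N$, which hinges on the fact that integer-valued multilinear polynomials on the boolean cube have integer coefficients (via M\"obius inversion).
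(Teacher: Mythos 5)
Your proof is correct, and its overall strategy coincides with the paper's: reduce $f$ on the cube to $P(x)/p^k$ for an integer polynomial $P$ of degree $\le d$ with nonnegative coefficients, extract the $k$ low-order base-$p$ digits of $P(x)$ via Lucas's theorem as classical polynomials of degree $\le p^j d$, and then interpolate $\varphi$ through the digits, with the identical final degree count $(p-1)d\sum_{j<k} p^j = (p^k-1)d$. The one step where you genuinely diverge is how the $j$-th digit is realized as a low-degree classical polynomial. The paper writes $P = M_1 + \cdots + M_t$ as a multiset of $\{0,1\}$-valued monomials and takes the elementary symmetric polynomial $S_{p^j}(M_1,\ldots,M_t)$, so the degree bound $p^j d$ is immediate from composition. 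You instead work with $\binom{Q(x)}{p^j}$ directly as a function of the integer $Q(x)$, and must then argue that dividing the falling factorial $\prod_{\ell<p^j}(Q(x)-\ell)$ by $(p^j)!$ does not spoil the degree bound; your argument via uniqueness of the multilinear expansion plus integrality of the M\"obius-inversion coefficients is valid (and the two constructions in fact produce the same function on the cube, since $S_{p^j}$ evaluated on a $\{0,1\}$-vector of weight $w$ equals $\binom{w}{p^j}$). The paper's symmetric-polynomial device buys a one-line degree bound; your route is slightly longer but makes the integrality phenomenon behind it explicit.
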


\begin{proof}
By the characterization of nonclassical polynomials, we have
$$
f(x) = \sum_{e,j} \frac{c_{e,j} |x_1|^{e_1} \ldots |x_n|^{e_n}}{p^j}
$$
where the sum is over $e=(e_1,\ldots,e_n)$ with $e_i \in \{0,\ldots,p-1\}$, $1 \le j \le k$ such that $\sum e_i + (p-1)(j-1) \le d$. We only care about the evaluation of $f$ on the boolean hypercube, which allows
for some simplifications. For any $x \in \{0,1\}^n$ we have $|x_1|^{e_1} \ldots |x_n|^{e_n} = \prod_{i \in I} x_i$ where $I = \{i: e_i \ne 0\}$. Thus, we can define an integer polynomial $P(x) = \sum_I c'_I \prod_{i \in I} x_i$ such that
$$
f(x) = \frac{P(x)}{p^k} \pmod{1} \qquad \forall x \in \{0,1\}^n,
$$
where $c'_I = \sum_{e: \{i:e_i \ne 0\}=I} \sum_j p^{k-j} c_{e,j}$. In particular, note that $P$ has degree at most $d$.  We may further simplify $P(x)=M_1(x)+\ldots+M_t(x)$, where each $M_i$ is a monomial of the form $\prod_{i \in I} x_i$, and monomials may be repeated (indeed, the monomial $\prod_{i \in I} x_i$ is repeated $c'_I$ times). Hence
$$
f(x) = \frac{M_1(x)+\ldots+M_t(x)}{p^k} \pmod{1} \qquad \forall x \in \{0,1\}^n.
$$
We care about the first $k$ digits in base $p$ of $P(x)=\sum M_i(x)$. These can be captured via the symmetric polynomials, using the fact that $M_i(x) \in \{0,1\}$ for all $x \in \{0,1\}^n$.

The $\ell$-th symmetric polynomial in $z=(z_1,\ldots,z_t)$, for $1 \le \ell \le t$, is a classical polynomial of degree $\ell$ defined as
$$
S_{\ell}(z)=\sum_{S \subset [t], |S|=\ell}\prod_{i \in S} z_{i}.
$$
When $z \in \{0,1\}^t$, it follows by Lucas theorem~\cite{Lucas} that the $i$-th digit of $z_1+\ldots+z_t$ in base $p$ is given by $S_{p^i}(z) \pmod{p}$. 

So, define a polynomial $Q:\F_p^k \to \F_p$ such that $Q(a_0,\ldots,a_{k-1}) = \varphi(\sum a_i p^i / p^k)$ for all $a_0,\ldots,a_{k-1} \in \{0,\ldots,p-1\}$, and polynomials $R_i:\F_p^n \to \F_p$ for $i=0,\ldots,k-1$ by $R_i(x)=S_{p^i}(M_1(x),\ldots,M_t(x))$. Note that $\deg(R_i) \le p^i d$. Define $g(x) = Q(R_0(x),\ldots,R_{k-1}(x))$. Then we have that
$$
g(x) = \varphi(f(x)) \qquad \forall x \in \{0,1\}^n.
$$
To conclude, we need to bound the degree of $g$. As monomials in $Q$ raise each variable to degree at most $p-1$, we 
have $\deg(g) \le (p-1) \sum \deg(R_i) \le (p^k-1) d$.
\end{proof}

\paragraph{Gowers uniformity norms.}
Let $F:\F^n \to \C$. The (multiplicative) derivative of $F$ in direction $h \in \F^n$ is given by $(\Delta_h F)(x) = F(x+h) \overline{F(x)}$. One can verify that if $f:\F^n \to \T$ and $F=e(f)$ then $\Delta_h F = e(D_h f)$.
The $d$-th Gowers uniformity norm $\|\cdot\|_{U^d}$ is defined as
$$
\|F\|_{U^d} := \left( \E_{h_1,\ldots,h_d,x \in \F^n}[\Delta_{h_1}\ldots \Delta_{h_d} F(x)] \right)^{1/2^d}.
$$
Observe that $\|F\|_{U^1} = | \E_x[F(x)] |$, which is a semi-norm. For $d \ge 2$, the Gowers uniformity norm turns out to indeed be a norm (but we will not need that).
The following lists the properties of the Gowers uniformity norm that we would need. For a proof and further details, see~\cite{Gow}.
\begin{itemize}
\item Let $f:\F^n \to \T$ and $F=e(f)$. Then $0 \le \|F\|_{U^d} \le 1$, where $\|F\|_{U^d}=1$ if and only if $f$ is a polynomial of degree $\le d-1$.
\item If $f:\F^n \to \T$ is a polynomial of degree $\le d-1$ then $\|F e(f)\|_{U^{d}} = \|F\|_{U^{d}}$ for any $F:\F^n \to \C$.
\item If $F(x_1,\ldots,x_n) = F_1(x_1) \ldots F_n(x_n)$ then $\|F\|_{U^d} = \|F_1\|_{U^d} \ldots \|F_n\|_{U_d}$.
\item (Gowers-Cauchy-Schwarz) For any $F:\F^n \to \C$ and any $d \ge 1$,
$$
0 \le \|F\|_{U^1} \le \|F\|_{U^2} \le \ldots \le \|F\|_{U^d}.
$$
\end{itemize}

\section{Approximating modular sums by polynomials}\label{sec:mod}

Viola and Wigderson~\cite{VW} proved that low-degree polynomials over $\F_2$ cannot correlate to the sum modulo $m$, as long as $m$ is odd.
Their proof technique is based on the Gowers uniformity norm. As such, it extends naturally to nonclassical polynomials. We capture that by the following theorem.
In the following, let $\omega_m = \exp(2 \pi i /m)$ be a primitive $m$-th root of unity.

\begin{thm}[Extension of \cite{VW} to nonclassical polynomials]\label{thm:modvw}
Let $f:\F_2^n \to \T$ be a polynomial of degree $<d$. Let $m \in \N$ be odd. Then for any $a \in \{1,\ldots,m-1\}$,
$$
\E_{x \in \{0,1\}^n} \left [e(f(x)) \cdot \omega_m^{a(x_1+\ldots+x_n)} \right] \le \exp(-c n/4^d)
$$
where $c=c_m>0$.
\end{thm}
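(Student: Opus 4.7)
The plan is to run the Viola--Wigderson argument essentially verbatim, observing that it uses only the derivative characterisation of polynomials and therefore transfers to the nonclassical setting without modification. Write $F(x)=e(f(x))\,\omega_m^{a(x_1+\cdots+x_n)}$, so the left-hand side of the theorem equals $\|F\|_{U^1}$. Gowers--Cauchy--Schwarz gives $\|F\|_{U^1}\le\|F\|_{U^d}$; since $f$ is a polynomial of degree $\le d-1$, the invariance $\|H\,e(f)\|_{U^d}=\|H\|_{U^d}$ yields $\|F\|_{U^d}=\|G\|_{U^d}$ where $G(x)=\omega_m^{a(x_1+\cdots+x_n)}$; and since $G$ factorises coordinatewise as $G(x)=\prod_{i} g(x_i)$ with $g:\F_2\to\C^*$ defined by $g(y)=\omega_m^{ay}$, the product rule gives $\|G\|_{U^d}=\|g\|_{U^d}^n$. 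Everything thus reduces to a one-variable estimate of $\|g\|_{U^d}$.

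To compute $\|g\|_{U^d}^{2^d}$ I would iterate the recursion $\|g\|_{U^d}^{2^d}=\E_{h\in\F_2}\|\Delta_h g\|_{U^{d-1}}^{2^{d-1}}$, which is immediate from the definition. Because $|g|\equiv 1$ we have $\Delta_0 g\equiv 1$, so the recursion collapses to $\|g\|_{U^d}^{2^d}=\tfrac12\bigl(1+\|\Delta_1 g\|_{U^{d-1}}^{2^{d-1}}\bigr)$. A short induction shows that, for $k\ge 1$, $\Delta_1^k g$ is a unit-modulus function on $\F_2$ with $\Delta_1^k g(0)=\omega_m^{(-1)^{k-1}2^{k-1}a}$ and $\Delta_1^k g(1)=\overline{\Delta_1^k g(0)}$. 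Unrolling the recursion, with base case $\|\Delta_1^{d-1}g\|_{U^1}^2=\cos^2(2\pi\cdot 2^{d-2}a/m)$, via the telescope $1-\sigma_k=\tfrac12(1-\sigma_{k+1})$ yields
$$
\|g\|_{U^d}^{2^d}\;=\;1-\frac{\sin^2\bigl(\pi\cdot 2^{d-1}a/m\bigr)}{2^{d-1}}.
$$

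To finish, I would invoke the hypothesis that $m$ is odd: then $\gcd(2,m)=1$ and $a\not\equiv 0\pmod m$, so $2^{d-1}a\not\equiv 0\pmod m$, and hence $\sin^2(\pi\cdot 2^{d-1}a/m)\ge c_m:=\min_{b\in\{1,\ldots,m-1\}}\sin^2(\pi b/m)>0$, a positive constant depending only on $m$. Chaining the bounds,
$$
\|F\|_{U^1}\;\le\;\|g\|_{U^d}^n\;=\;\bigl(\|g\|_{U^d}^{2^d}\bigr)^{n/2^d}\;\le\;\bigl(1-c_m/2^{d-1}\bigr)^{n/2^d}\;\le\;\exp(-2c_m n/4^d),
$$
giving the theorem with $c=2c_m$. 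I do not expect any real obstacle: the use of oddness is essential (for even $m$ the choice $a=m/2$ makes $\omega_m^{ax_i}=(-1)^{x_i}$ a classical degree-$1$ polynomial and the correlation can be $1$), and the reason the argument passes to nonclassical polynomials without change is simply that the Gowers norm is sensitive only to the derivatives of $f$---not to the particular subgroup of $\T$ containing $f$'s image---so the proof is indifferent to whether $f$ is classical or nonclassical.
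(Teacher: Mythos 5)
Your proposal is correct and follows essentially the same route as the paper: reduce to $\|F\|_{U^d}$ via Gowers--Cauchy--Schwarz, strip off $e(f)$ using the polynomial-invariance of the $U^d$ norm, factor over coordinates, and show the one-variable norm is bounded away from $1$ precisely because $2^{d-1}a\not\equiv 0\pmod m$ when $m$ is odd. The only cosmetic difference is that you evaluate $\|g\|_{U^d}^{2^d}$ by unrolling the derivative recursion, whereas the paper lists the iterated derivatives directly; both yield the identical quantity $1-2^{-d}\bigl(1-\cos(2\pi\cdot 2^{d-1}a/m)\bigr)=1-\sin^2(\pi\cdot 2^{d-1}a/m)/2^{d-1}$.
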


\begin{proof}
Let $F(x) = e(f(x)) \cdot \omega_m^{a(x_1+\ldots+x_n)}$. By the properties of the Gowers uniformity norm,
$$
|\E_x[F(x)]| \le \|F\|_{U^d} = \|\omega_m^{a(x_1+\ldots+x_n)}\|_{U^d} = \prod_{i=1}^n \|\omega_m^{a x_i}\|_{U^d} = \|e(g)\|_{U^d}^n,
$$
where $g:\F_2 \to \T$ is given by $g(0)=0, g(1)=a/m$. A routine calculation shows that
$$
D_{h_1} \ldots D_{h_d} g(x) = \bigg \{
\begin{array}{ll}
a' / m&\textrm{if } h_1=\ldots=h_d=1, x=0\\
-a' / m&\textrm{if } h_1=\ldots=h_d=1, x=1\\
0&\textrm{otherwise}
\end{array}
$$
where $a' = a 2^{d-1}$ is nonzero modulo $m$. Hence $\|e(g)\|_{U^d}^{2^d}=(1-2^{-d}) + 2^{-d} \cos(2 \pi a'/m) \le 1 - 2^{-d} \cdot \Omega(1/m^2)$ and
$$
|\E[F]| \le \left(1 - 2^{-d} \cdot \Omega(1/m^2) \right)^{n/2^d} \le \exp(-c n/4^d)
$$
where $c = \Omega(1/m^2)$.
\end{proof}

This proof technique gives trivial bounds for $d \gg \log n$. Here, we show that this is for a good reason, as there are nonclassical polynomials of degree $O(\log n)$ which well approximate the sum modulo $m$.

\begin{thm}\label{thm:modvwcons}
Let $m \in \N$ be odd and fix $a \in \{1,\ldots,m-1\}$. For any $\eps>0$ there exists a polynomial $f:\F_2^n \to \T$ of degree $\log \l(\frac{n+m}{\eps}\r)+O(1)$ such that
$$
\E_{x \in \{0,1\}^n} \left [e(f(x)) \cdot \omega_m^{a(x_1+\ldots+x_n)} \right] = 1 + u
$$
where $|u| \le \eps$.
\end{thm}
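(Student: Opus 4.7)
The plan is to construct $f$ as a dyadic approximation of the target phase $-a(x_1+\ldots+x_n)/m$ on the torus. Fix an integer parameter $K$ (to be chosen $\approx \log_2(n/\eps)$) and let $r \in \Z$ be any integer minimising $\|r/2^K + a/m\|_\T$, so that $\|r/2^K + a/m\|_\T \le 2^{-(K+1)}$. Define
$$ f(x) := \frac{r\,(|x_1|+\ldots+|x_n|)}{2^K} \pmod{1}. $$

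First I verify that $f$ is a nonclassical polynomial of degree at most $K$. Writing $r \equiv \sum_{j=0}^{K-1} b_j 2^j \pmod{2^K}$ with $b_j \in \{0,1\}$, each single-variable piece $r|x_i|/2^K$ rewrites modulo $1$ as $\sum_{j=0}^{K-1} b_j |x_i|/2^{K-j}$. In the Tao--Ziegler normal form recalled in Section~\ref{sec:prelim}, each such term is a monomial with $\sum_i e_i = 1$ at depth $k = K-j-1$, contributing degree $\sum_i e_i + (p-1)k = 1 + (K-j-1) = K-j \le K$. Summing over $i$ preserves the degree bound, so $\deg(f) \le K$.

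Next I compute the correlation. Letting $S(x) := x_1+\ldots+x_n$, I have $e(f(x))\cdot \omega_m^{a S(x)} = e\bigl(S(x)(r/2^K + a/m)\bigr)$. Representing the torus element $r/2^K + a/m$ by its canonical lift $\delta \in (-1/2, 1/2]$, the choice of $r$ gives $|\delta| \le 2^{-(K+1)}$. Hence for every $x \in \{0,1\}^n$,
$$ |e(S(x)\delta) - 1| \le 2\pi\, S(x)\, |\delta| \le \pi n / 2^K. $$
Taking expectations yields $\E_x[e(f(x))\,\omega_m^{a S(x)}] = 1+u$ with $|u| \le \pi n/2^K$. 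Setting $K := \lceil \log_2(\pi n/\eps) \rceil$ gives $|u| \le \eps$ and $\deg(f) \le K = \log_2(n/\eps)+O(1) \le \log\bigl((n+m)/\eps\bigr)+O(1)$, matching the claim.

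The only delicate point is the conversion between the binary expansion of $r$ and the Tao--Ziegler depths, but this is immediate from the normal form; no probabilistic construction or inverse theorem is required, and the $m$-dependence in the stated degree is absorbed because $\log_2(n/\eps) \le \log_2((n+m)/\eps)$.
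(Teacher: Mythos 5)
Your proposal is correct and is essentially the paper's own construction: both take $f(x)=c\,(|x_1|+\cdots+|x_n|)/2^K \pmod 1$ with the integer $c$ chosen so that $c/2^K$ is a dyadic approximation of $-a/m$ on the torus, verify the degree bound from the Tao--Ziegler normal form, and bound the error pointwise by $|e(t)-1|\le 2\pi|t|$. Your choice of the \emph{nearest} dyadic rational is a slightly cleaner bookkeeping than the paper's $r\equiv a2^k \pmod m$ device and even removes the $m$-dependence from the degree, but the argument is the same.
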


\begin{proof}
Let $k \ge 1$ to be specific later. Let $r \in \{0,\ldots,m-1\}$ be such that $r \equiv a 2^k \pmod{m}$ and let $A=\frac{r - a 2^k}{m} \in \Z$.
Define $f:\F_2^n \to \T$ as
$$
f(x) = \frac{A(|x_1|+\ldots+|x_n|)}{2^k} \pmod{1}.
$$
Note that $f$ is a polynomial of degree $\le k$. For $x \in \{0,1\}^n$, if $x_1+\ldots+x_n=p m + q$ where $q \in \{0,\ldots,m-1\}$, then
$$
f(x)
\equiv \frac{A (pm+q)}{2^k}
\equiv \frac{rp+\frac{rq}{m}}{2^k}-\frac{aq}{m}=-\frac{aq}{m}+\theta_x \pmod{1},
$$
where $0 \le \theta_x \le (n+m)/2^k$. We choose $k\ge \log \l(\frac{n+m}{\eps}\r)+c$ for some universal constant $c$ so that $|e(\theta_x)-1| \le \eps$ for all $x$. Hence
$$
\left| \E \left [e(f(x)) \cdot \omega_m^{a(x_1+\ldots+x_n)} \right] - 1 \right| = \left| \E \left [e(\theta_x) - 1 \right] \right| \le \E \left [ \left| e(\theta_x) - 1 \right| \right] \le \eps.
$$
\end{proof}

\section{Approximating majority by nonclassical polynomials}\label{sec:maj}

The majority function $\Maj:\F_2^n \to \F_2$ is defined as
$$\Maj(x)=\l\{\begin{array}{ll}0 & \text{if }\sum_{i=1}^n |x_i| \leq n/2 \\
1 & \text{otherwise}\end{array}\r.
$$
We first show that is correlates well with a nonclassical polynomial of degree $O(\log n)$.

\begin{thm}There is a nonclassical polynomial $f:\F_2^n \to \T$ of degree $\log n+1$ such that $$\l|\E\l[(-1)^{\Maj(x)}e(f(x))\r]\r| \ge c,$$
where $c>0$ is an absolute constant.
\end{thm}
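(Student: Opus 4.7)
The plan is to take the symmetric nonclassical polynomial
\[
f(x) \;=\; \frac{|x_1|+|x_2|+\cdots+|x_n|}{2^k} \;-\; \alpha \pmod 1,
\]
where $2^k$ is the largest power of $2$ not exceeding $\sqrt n$ (so $k = \lfloor(\log n)/2\rfloor$), and $\alpha \in \T$ is the fractional part of $n/2^{k+1}$ (absorbed as the allowed shift in the Tao--Ziegler representation). By the characterization in Section~\ref{sec:prelim}, each term $|x_i|/2^k$ has $\sum_j e_j = 1$ and denominator $2^k$, so the nonclassical degree of $f$ is exactly $k \le (\log n)/2 \le \log n + 1$. The heuristic motivation: $\Maj(x)$ depends nontrivially only on a window $|S - n/2| = O(\sqrt n)$, and choosing $2^k \asymp \sqrt n$ makes $f(x) \pmod 1$ resolve this window at the $\Theta(1)$ scale on the torus, which is exactly what is needed for $e(f(x))$ to carry constant information about the sign of $S - n/2$.

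\noindent\textbf{Reduction to a Gaussian integral.}
Set $S := |x_1| + \cdots + |x_n|$ and $W := S - n/2$, so that $\Maj(x) = 1$ iff $W > 0$, and $(-1)^{\Maj(x)} = -\mathrm{sign}(W)$ on $W \ne 0$. By the choice of the shift $\alpha$, one has $f(x) \equiv W/2^k \pmod 1$. Setting $Y := 2W/\sqrt n$, the Central Limit Theorem gives that $Y$ converges in distribution to the standard Gaussian $N(0,1)$; for a clean calculation I would first specialize to $n = 4^m$ so that $2^k = \sqrt n$ exactly and $W/2^k = Y/2$. Then
\[
\E\bigl[(-1)^{\Maj(x)}\, e(f(x))\bigr] \;\longrightarrow\; -\E_{Y \sim N(0,1)}\bigl[\mathrm{sign}(Y)\, e^{i\pi Y}\bigr] \qquad (n \to \infty).
\]

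\noindent\textbf{Evaluating the Gaussian integral.}
Splitting $e^{i\pi Y} = \cos(\pi Y) + i\sin(\pi Y)$: the product $\mathrm{sign}(Y)\cos(\pi Y)$ is odd and integrates to zero against the symmetric density, while $\mathrm{sign}(Y)\sin(\pi Y)$ is even, yielding
\[
-\E_Y[\mathrm{sign}(Y)\, e^{i\pi Y}] \;=\; -2i \int_0^\infty \sin(\pi Y)\,\phi(Y)\, dY,
\]
with $\phi$ the standard Gaussian density. This integral is a strictly positive absolute constant: completing the square $i\pi Y - Y^2/2 = -(Y-i\pi)^2/2 - \pi^2/2$ and shifting the contour, it evaluates in closed form to something proportional to $e^{-\pi^2/2}\,\mathrm{erfi}(\pi/\sqrt 2) \ne 0$. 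Hence the magnitude of the correlation in the limit is a nonzero absolute constant $c > 0$.

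\noindent\textbf{Main obstacle.}
The substantive work is making the CLT replacement quantitative. I would handle this by directly expanding
$\E[(-1)^{\Maj(x)} e(f(x))] = 2^{-n}\sum_S \binom{n}{S}(-1)^{1[S>n/2]} e(S/2^k - \alpha)$, using the symmetry $\binom{n}{S}=\binom{n}{n-S}$ to pair $S$ with $n-S$, and applying Stirling's formula (or a local CLT bound) to $\binom{n}{S}/2^n$ to interpret the resulting trigonometric sum as a Riemann sum for the Gaussian integral above; the error is $O(1/\sqrt n)$, which is dominated by the leading constant. A secondary technical nuisance is general $n$ for which $n/2^{k+1}$ is not an integer; this is absorbed entirely by the shift $\alpha$ in the polynomial, which only rotates the final expectation and preserves its magnitude.
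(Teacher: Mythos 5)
Your construction and the paper's are close cousins, and both work, but the key lower-bound step is handled genuinely differently. The paper takes $f(x)=A(\sum|x_i|-n/2)/2^k$ with $2^k\approx n$ and coefficient $A\approx a\sqrt n$ for a small constant $a$; after pairing $j\leftrightarrow -j$ the correlation reduces to $2^{-n}\sum_j\binom{n}{n/2-j}\sin(2\pi Aj/2^k)$, and the whole point of choosing $a$ small is that for all $j$ in the bulk $|j|\le b\sqrt n$ the phase stays in $[0,\pi/4]$, so every term is nonnegative, $\sin x\ge x/2$ applies, and no cancellation analysis is needed (the tail is discarded via Chernoff). You instead put the $\sqrt n$ into the denominator itself ($2^k\approx\sqrt n$, unit coefficient), which forces the phase to wrap around across the relevant window; your sum genuinely oscillates, and you must evaluate the limiting Gaussian integral $\int_0^\infty\sin(\pi y)\phi(y)\,dy$ and verify it is nonzero --- which it is, being a positive multiple of Dawson's function at $\pi/\sqrt2$. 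What your route buys is a polynomial of degree about $(\log n)/2$ rather than $\log n+1$, plus an explicit limiting constant; what it costs is the quantitative local-CLT/Riemann-sum step and the nonvanishing of an oscillatory integral, both routine but more work than the paper's all-terms-positive trick. Two points to tighten: for general $n$ the ratio $\beta=\sqrt n/(2\cdot 2^k)$ ranges over $[1/2,1)$, so you need $\int_0^\infty\sin(2\pi\beta y)\phi(y)\,dy$ bounded below uniformly for $\beta$ in that interval (true, since Dawson's function is positive and bounded away from zero on $[\pi/\sqrt{2},\sqrt{2}\pi]$) --- this is a separate issue from the shift $\alpha$, which you can in fact drop entirely, since a constant shift only rotates the expectation and the theorem concerns its magnitude.
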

\begin{proof}
We assume $n$ even for the proof. The proof is similar for odd $n$. Let $A=\lfloor a \sqrt{n} \rfloor$ for $a>0$ to be specified later. Let $k$ be the smallest integer such that $2^k \geq n$. Set $$f(x)=\frac{A(\sum_{i=1}^n |x_i|-n/2)}{2^k}.$$
Note that $\deg(f)=\log n+1$. Now,
\begin{align*}
&\E\l[(-1)^{\Maj(x)}e(f(x))\r]\\
& =  2^{-n}\sum_{i=0}^{n/2} \binom{n}{i}e\l(A(i-n/2)/2^k\r)  -  2^{-n}\sum_{i=n/2+1}^n \binom{n}{i}e\l(A(i-n/2)/2^k\r)\\
&=2^{-n}\sum_{j=1}^{n/2} \binom{n}{n/2-j}e\l(-Aj/2^k\r)  -  2^{-n}\sum_{j=1}^{n/2} \binom{n}{n/2-j}e\l(Aj/2^k\r)+2^{-n}\binom{n}{n/2}\\
&=-2i \cdot 2^{-n}\sum_{j=1}^{n/2}\binom{n}{n/2-j}\sin \l(2 \pi Aj/2^k\r)+2^{-n}\binom{n}{n/2},
\end{align*}
where in the last equation $i=\sqrt{-1}$. Let $C=2^{-n}\sum_{j=1}^{n/2} \binom{n}{n/2-j}\sin \l(2 \pi Aj/2^k\r)$, so that $\l| \E\l[(-1)^{\Maj(x)}e(f(x))\r] \r| \geq 2C$. We will show that $C \ge \Omega(1)$.
Let $b>0$ be a constant to be specified later. We bound
$$
C \ge 2^{-n}\sum_{j=1}^{b\sqrt{n}}\binom{n}{n/2-j}\sin\l(2\pi Aj/2^k\r) - \exp(-2b^2),
$$
where the error term follows from the Chernoff bound. We set $a=1/8b$. For all $1 \le j \le b \sqrt{n}$ we have $2 \pi A j / 2^k \le \pi/4$. Applying the estimate $\sin(x) \ge x/2$
which holds for all $0 \le x \le \pi/4$, we obtain that
$$
C \ge \frac{\pi}{32 b \sqrt{n}} \cdot 2^{-n}\sum_{j=1}^{b\sqrt{n}}\binom{n}{n/2-j} j - \exp(-2b^2).
$$
Now, if $b$ is a large enough constant, standard bounds on the binomial coefficients give that
$$
2^{-n} \sum_{j=1}^{b\sqrt{n}}\binom{n}{n/2-j} j = \Omega(\sqrt{n}).
$$
Hence, we obtain that
$$
C \ge \Omega(1/b)-\exp(-2 b^2).
$$
If $b$ is chosen a large enough constant, this shows that $C \ge \Omega(1)$ as claimed.
\end{proof}

We next show that the Razborov-Smolensky technique generalizes to nonclassical polynomials when we require the polynomial to exactly compute $\Maj$.
Recall that we identify $\F_2$ with $\{0,1/2\} \subset \T$ and consider $\Maj:\F_2^n \to \{0,1/2\}$.

\begin{thm}Let $f:\F_2^n \to \T$ be a nonclassical polynomial of degree $d$ and depth $<k$. Then,
$$
\Pr_{x \in \{0,1\}^n}[f(x)=\Maj(x)]\leq \frac{1}{2}+O\l(\frac{2^k d}{\sqrt{n}}\r).
$$
\end{thm}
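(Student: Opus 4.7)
The plan is to reduce to the classical Razborov--Smolensky bound \eqref{eq:razsmo} via Lemma~\ref{lemma:nonclassical_to_classical}. Since $f$ has depth at most $k-1$, its image on $\{0,1\}^n$ lies in $\U_{2,k} = \frac{1}{2^k}\Z/\Z$. I would choose a map $\varphi:\U_{2,k} \to \F_2$ that agrees with the natural identification $\F_2 \cong \{0,1/2\} \subset \T$, namely $\varphi(0)=0$ and $\varphi(1/2)=1$, with $\varphi$ defined arbitrarily on the remaining $2^k-2$ elements of $\U_{2,k}$.

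Applying Lemma~\ref{lemma:nonclassical_to_classical} with this $\varphi$ and $p=2$, I obtain a classical polynomial $g:\F_2^n \to \F_2$ of degree at most $(2^k-1)d$ such that $g(x)=\varphi(f(x))$ for every $x \in \{0,1\}^n$. By the choice of $\varphi$, whenever $f(x)=\Maj(x)$, we have $f(x) \in \{0,1/2\}$ and hence $g(x) = \varphi(f(x))$ equals $0$ or $1$ according as $\Maj(x)=0$ or $1/2$. Identifying $\Maj(x) \in \{0,1/2\}$ with its preimage in $\F_2$, this means $g(x)=\Maj(x)$ in $\F_2$. Therefore
\[
\{x \in \{0,1\}^n : f(x)=\Maj(x)\} \subseteq \{x \in \{0,1\}^n : g(x)=\Maj(x)\},
\]
so $\Pr_x[f(x)=\Maj(x)] \le \Pr_x[g(x)=\Maj(x)]$.

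Now I would invoke the classical Razborov--Smolensky inequality \eqref{eq:razsmo} on $g$, which has degree at most $(2^k-1)d$, to conclude
\[
\Pr_x[f(x)=\Maj(x)] \;\le\; \Pr_x[g(x)=\Maj(x)] \;\le\; \frac{1}{2} + O\!\left(\frac{(2^k-1)d}{\sqrt{n}}\right) \;=\; \frac{1}{2} + O\!\left(\frac{2^k d}{\sqrt{n}}\right).
\]

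There is no real obstacle here: the only step that needed care was setting up $\varphi$ so that the event $\{f=\Maj\}$ is contained in $\{g=\Maj\}$, and the degree blow-up of $2^k-1$ in Lemma~\ref{lemma:nonclassical_to_classical} is precisely what produces the $2^k$ factor in the final bound. In particular, the fact that the lemma only controls $g$ on $\{0,1\}^n$ (rather than all of $\F_2^n$) is harmless, since both $\Maj$ and the probability of agreement are defined on the Boolean cube.
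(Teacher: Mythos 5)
Your proof is correct and follows essentially the same route as the paper: the same choice of $\varphi$ (sending $0\mapsto 0$, $1/2\mapsto 1$, arbitrary elsewhere), an application of Lemma~\ref{lemma:nonclassical_to_classical} to get a classical $g$ of degree at most $(2^k-1)d$ with $\{f=\Maj\}\subseteq\{g=\Maj\}$, and then the Razborov--Smolensky bound. Your closing remark about the lemma only controlling $g$ on the Boolean cube is a fine observation, though for $p=2$ the cube is all of $\F_2^n$ anyway.
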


\begin{proof}
Let $\varphi:\U_{2,k} \to \F_2$ be defined as $\varphi(0)=0$, $\varphi(1/2)=1$ and choose arbitrarily $\varphi(x)$ for $x \in \U_{2,k} \setminus \{0,1/2\}$. Applying Lemma~\ref{lemma:nonclassical_to_classical}, there exists a classical polynomial $g:\F_2^n \to \F_2$ such that $g(x)=\varphi(f(x))$ for all $x \in \F_2^n$, where $\deg(g) \le (2^k-1)d$. In particular,
$$
\Pr_{x \in \F_2^n}[g(x)=\Maj(x)] \ge \Pr_{x \in \F_2^n}[f(x) = \Maj(x)].
$$
Hence, we can apply the Razborov-Smolensky~\cite{Raz2,Sm} bound to $g$ and conclude that
$$
\Pr[f(x) = \Maj(x)] \le \frac{1}{2} + O\l(\frac{\deg(g)}{\sqrt{n}}\r).
$$
\end{proof}

\section{Weak representation of the OR function}\label{sec:or}

A set of classical polynomials $f_i:\F_{p_i}^n \to \F_{p_i}$ is said to weakly represent the OR function if they all map $0^n$ to zero, and for any other point in the boolean hypercube, at least one of them map it to a nonzero value. This definition extends naturally to nonclassical polynomials.

\begin{define} Let $p_1,\ldots,p_r$ be distinct primes. A set of polynomials $f_i:\F_{p_i}^n \to \T$ weakly represent the OR function if
\begin{itemize}
\item $f_1(0^n)=\ldots=f_r(0^n)=0$.
\item For any $x \in \{0,1\}^n \setminus 0^n$, there exists some $i$ such that $f_i(x) \ne 0$.
\end{itemize}
\end{define}

It is well known that a single classical polynomial $f:\F_p^n \to \F_p$ which weakly represents the OR function, must have degree at least $n/(p-1)$. This is since $f(x)^{p-1}$ computes the OR function on $\{0,1\}^n$, and hence its multi-linearization (obtained by replacing any power $x_i^{e_i}$, $e_i \ge 1$ with $x_i$) must be the unique multi-linear extension of the OR function, which has degree $n$.

We first show that there is a nonclassical polynomial of degree $O(\log n)$ which weakly represents the OR function.

\begin{lem}
There exists a polynomial $f:\F_p^n \to \T$ of degree $O(p \lceil \log_p n \rceil)$ which weakly represents the OR function.
\end{lem}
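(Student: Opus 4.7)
The plan is to exhibit an explicit nonclassical linear form whose only zero on the Boolean cube is the origin. Let $k$ be the smallest positive integer with $p^k > n$, so that $k \le \lceil \log_p n \rceil + 1$. I would define
$$
f(x_1,\ldots,x_n) \;=\; \frac{|x_1|+|x_2|+\cdots+|x_n|}{p^k} \pmod 1,
$$
using the natural embedding $|\cdot|:\F_p \to \{0,\ldots,p-1\}$.

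First I would verify that $f$ is a nonclassical polynomial of the claimed degree by reading off the Tao--Ziegler characterization recalled in Section~\ref{sec:prelim}. Each summand is of the form $\frac{c\,|x_i|}{p^{k'+1}}$ with $k'=k-1$ and $\sum e_j = 1$, so it satisfies $\sum e_j + (p-1)k' = 1+(p-1)(k-1)$. Hence $f$ is a polynomial of degree at most $1 + (p-1)(k-1) = O(p\lceil \log_p n\rceil)$ and of depth $k-1$.

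Next I would check the weak representation property. The vanishing at $0^n$ is immediate since every $|x_i|=0$. For any nonzero $x \in \{0,1\}^n$, the integer $N(x) := |x_1|+\cdots+|x_n|$ lies in $\{1,2,\ldots,n\}$, so by our choice of $k$ we have $1 \le N(x) \le n < p^k$. Thus $N(x)/p^k$ is a rational in the open interval $(0,1)$, and in particular $f(x) \not\equiv 0 \pmod 1$. This gives precisely the weak representation of OR.

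I do not anticipate a real obstacle here; the construction is essentially forced by the desire to make the weight function $N(x)$ nonzero modulo $p^k$. The only point requiring a bit of care is the degree accounting from the Tao--Ziegler formula, since the intuition ``linear polynomial of depth $k-1$'' masks the actual contribution $(p-1)(k-1)$ to the degree, which is exactly what gives the final bound $O(p\lceil \log_p n\rceil)$ rather than $O(\log_p n)$.
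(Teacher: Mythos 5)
Your construction is identical to the paper's: the same polynomial $f(x)=\frac{|x_1|+\cdots+|x_n|}{p^k}$ with $k$ minimal so that $p^k>n$, with the same degree count $1+(p-1)(k-1)$ and the same verification that $N(x)\in\{1,\ldots,n\}$ cannot vanish modulo $p^k$. The proof is correct and complete.
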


\begin{proof}
Let $k \ge 1$ be minimal such that $p^k>n$. Define $f(x)=\frac{|x_1|+\ldots+|x_n|}{p^k}$. This is a polynomial of degree $1 + (p-1)(k-1)$. Clearly $f(0^n)=0$ and $f(x) \ne 0$ for any $x \in \{0,1\}^n \setminus 0^n$.
\end{proof}

We show that allowing for multiple nonclassical polynomials can only improve this simple construction by a polynomial factor.

\begin{thm}\label{thm:weakor}
Let $p_1,\ldots,p_r$ be distinct primes, and let $p=\max(p_1,\ldots,p_r)$. Let $f_i:\F_{p_i}^n \to \T$ be polynomials which weakly represent the OR function. Then at least one of the polynomials
must have degree $\Omega((\log_p{n})^{1/r})$.
\end{thm}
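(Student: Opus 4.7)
The plan is to adapt the Barrington-Tardos (BT) lower bound to nonclassical polynomials. A direct reduction via Lemma~\ref{lemma:nonclassical_to_classical} would convert each $f_i$ of degree $d_i$ and depth $k_i - 1$ into a classical polynomial of degree $(p_i^{k_i} - 1) d_i$, incurring a factor $p_i^{k_i}$ in the degree. Since $k_i$ can be as large as $\Theta(\log_{p_i} n)$ (as in the construction just proved), this naive reduction is too lossy to yield the desired bound. The key idea is therefore to work with $f_i$ at a finer granularity: its $p_i$-adic digit decomposition on the boolean cube.

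Concretely, for $x \in \{0,1\}^n$ I would write $f_i(x) = \sum_{j=0}^{k_i-1} a_{i,j}(x)\, p_i^{j}/p_i^{k_i}$, where each $a_{i,j}\colon \{0,1\}^n \to \F_{p_i}$ is the restriction of a classical polynomial of degree at most $p_i^{j} d_i$, obtained via the symmetric-polynomial presentation used in the proof of Lemma~\ref{lemma:nonclassical_to_classical}. Then $f_i(x)=0$ on the boolean cube is equivalent to $a_{i,j}(x)=0$ for all $j$, so the weak OR condition becomes the statement that the simultaneous zero set of all $a_{i,j}$ inside $\{0,1\}^n$ is exactly $\{0^n\}$.

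With this decomposition in hand, I would run a BT-style induction on the number of primes $r$. At each step, pick a ``top'' digit $a_{i,k_i-1}$ of some $f_i$ and restrict the variables so this digit vanishes on a large sub-cube (e.g.\ by freezing a few coordinates so that the top-digit equation becomes trivial, or by Schwartz-Zippel / dimension counting to locate such a sub-cube). This leaves a problem on $r-1$ primes on a sub-cube of smaller but still exponential dimension, and the recursion gives the bound $\Omega((\log_p n)^{1/r})$.

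The hard part will be coordinating restrictions across the digits of a single $f_i$: since $a_{i,0},\ldots,a_{i,k_i-1}$ all come from the same nonclassical polynomial they are not independent, and their degrees grow as $p_i^{j} d_i$ so the highest digit is already much more expensive than $d_i$. The whole ``batch'' of $k_i$ digits for one prime must be peeled off in a single induction step, which is why I expect the final exponent to be $1/r$ rather than BT's $1/(r-1)$: one effective level of the induction is consumed by the depth of the nonclassical structure. Making this peeling work while preserving enough free coordinates for the remaining recursion is the ``several nontrivial modifications'' the authors mention, and it is where I expect most of the technical work to lie.
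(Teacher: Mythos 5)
There is a genuine gap, and it begins with your opening premise. You reject the direct reduction via Lemma~\ref{lemma:nonclassical_to_classical} on the grounds that the depth $k_i-1$ can be as large as $\Theta(\log_{p_i} n)$, but the depth of a nonclassical polynomial is controlled by its degree: the Tao--Ziegler characterization forces $\sum e_i + (p-1)k \le d$, so $k \le d/(p-1)+1$ and hence $p^k \le 2^d p$. Since the whole argument is a proof by contradiction in which $d$ is assumed small, the factor $p^k-1$ is only $2^{O(d)}$, and this loss is affordable: the paper's proof applies Lemma~\ref{lemma:nonclassical_to_classical} essentially verbatim (via Claims~\ref{clm:singlepoly} and~\ref{clm:manypoly}), and this $2^{O(d)}$ factor is precisely the source of the exponent $1/r$ in place of Barrington--Tardos's $1/(r-1)$. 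Your intuition that ``one level is consumed by the depth'' is right in spirit, but the mechanism is that $\log(2dp^k)=O(d\log p)$ rather than $O(\log(dp))$ as in the classical case; it has nothing to do with peeling digits. Your proposed digit decomposition also buys nothing: the top digit $a_{i,k_i-1}$ already has degree $p_i^{k_i-1}d_i$, the same order as the bound from Lemma~\ref{lemma:nonclassical_to_classical}, so working at that ``finer granularity'' does not reduce the degree you must pay.

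More importantly, the heart of the argument is missing from your sketch. The induction does not proceed by ``freezing a few coordinates so that the top-digit equation becomes trivial,'' nor by Schwartz--Zippel. It requires finding, for each $f_i$ in turn, a large combinatorial cube --- spanned by $\ell$ pairwise disjoint nonempty subsets $S_1,\ldots,S_\ell\subset[n]$ acting as super-variables --- on which $f_i$ vanishes identically (Lemma~\ref{lem:lowerboundlem}), and then restricting the remaining polynomials to that cube. Building this cube needs two ingredients you do not supply: first, a common-root statement (Claim~\ref{clm:manypoly}) showing that several low-degree polynomials vanishing at $0^n$ share a nonzero boolean root, which is proved from the degree-$n$ lower bound for classical weak OR representation, not from dimension counting over $\F_p^n$; and second, the interpolation statement (Claim~\ref{clm:interpolate}) that a degree-$d$ polynomial vanishing on the Hamming ball $B(j,d)$ vanishes on all of $\{0,1\}^j$, which is what keeps the number of restricted polynomials one must kill at each stage polynomial in $j$ rather than exponential and hence keeps the recursion $\ell_i \ge 2dp^k\ell_{i+1}^{d+1}$ under control. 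Without these two pieces your induction step has no way to preserve the vanishing already achieved for earlier primes while handling the next one, and no quantitative bound emerges.
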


The proof is an adaptation of the result of Barrington and Tardos~\cite{BT}, who proved similar lower bounds for classical polynomials. We start by showing that a low degree polynomial $f$ with $f(0)=0$ must have
another point $x$ with $f(x)=0$.

\begin{clm}\label{clm:singlepoly}
Let $f:\F_p^n \to \T$ be a polynomial of degree $d$ and depth $\le k-1$ such that $f(0)=0$. If $n > (p^k-1) d$ then there exists $x \in \{0,1\}^n \setminus 0^n$ such that $f(x) = 0$.
\end{clm}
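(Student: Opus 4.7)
The plan is to reduce the claim to the classical fact that the OR function on $\{0,1\}^n$ has multilinear degree exactly $n$, using Lemma~\ref{lemma:nonclassical_to_classical} to pass from the nonclassical polynomial $f$ to a classical polynomial $g$ whose zero set on the hypercube coincides with that of $f$.

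First, since $f$ has depth at most $k-1$, its image lies in $\U_{p,k}$. Define $\varphi:\U_{p,k} \to \F_p$ by $\varphi(0)=0$ and $\varphi(y)=1$ for every $y \in \U_{p,k} \setminus \{0\}$. By Lemma~\ref{lemma:nonclassical_to_classical}, there exists a classical polynomial $g:\F_p^n \to \F_p$ of degree at most $(p^k-1)d$ such that $g(x)=\varphi(f(x))$ for every $x \in \{0,1\}^n$. By the choice of $\varphi$, for every $x \in \{0,1\}^n$ we have $g(x)=0$ iff $f(x)=0$.

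Next I argue by contradiction: suppose $f(x) \ne 0$ for every $x \in \{0,1\}^n \setminus 0^n$. Combined with the hypothesis $f(0)=0$, this means $g(0^n)=0$ and $g(x)=1$ for every $x \in \{0,1\}^n \setminus 0^n$. In other words, the restriction $g|_{\{0,1\}^n}$ computes the OR function. Replacing each power $x_i^{e_i}$ (for $e_i\ge 1$) by $x_i$ in $g$ produces a multilinear polynomial $\tilde g$ over $\F_p$ of degree at most $\deg(g) \le (p^k-1)d$ which agrees with $g$, and hence with OR, on the entire hypercube. But the multilinear extension of a function on $\{0,1\}^n$ is unique, and the multilinear extension of OR has degree exactly $n$ (its unique multilinear representation is $1-\prod_i (1-x_i)$). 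Therefore $n \le \deg(\tilde g) \le (p^k-1)d$, contradicting the hypothesis $n > (p^k-1)d$. This contradiction establishes the existence of the desired $x$.

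There is no real obstacle here once Lemma~\ref{lemma:nonclassical_to_classical} is in hand; the only subtlety is in the choice of $\varphi$. A naive reduction might set $\varphi$ to be an indicator and then take a $(p-1)$-st power of $g$ to convert $\F_p$-values into $\{0,1\}$, which would lose a factor of $p-1$ in the degree bound. By allowing $\varphi$ itself to be the $\{0,1\}$-indicator of $\{0\}$ (this is permitted since $\varphi$ is an arbitrary function in Lemma~\ref{lemma:nonclassical_to_classical}), the produced $g$ is already the OR function on $\{0,1\}^n$, and one obtains the clean bound $(p^k-1)d$ matching the claim.
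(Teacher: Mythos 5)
Your proof is correct and follows essentially the same route as the paper: apply Lemma~\ref{lemma:nonclassical_to_classical} with $\varphi$ the $\{0,1\}$-indicator of nonzero elements of $\U_{p,k}$, observe that if $f$ had no nonzero root then $g$ would compute OR on the hypercube, and derive a contradiction from the fact that the multilinear extension of OR has degree $n$. The extra detail you supply about multilinearization is exactly the justification the paper invokes implicitly when it asserts $\deg(g)\ge n$.
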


We note that the bound on $n$ is fairly tight, as $f(x)=(x_1+\ldots+x_n)/p^k \pmod{1}$ violates the conclusion of the claim whenever $n<p^k$.

\begin{proof}
Let $\varphi:\U_{p,k} \to \F_p$ be given by $\varphi(0)=0$, $\varphi(x)=1$ for all $x \ne 0$. Applying Lemma~\ref{lemma:nonclassical_to_classical}, 
there exists a classical polynomial $g:\F_p^n \to \F_p$ of degree $\le (p^k-1) d$ such that $g(x)=0$ if $f(x)=0$, and $g(x)=1$ if $f(x) \ne 0$, for all $x \in \{0,1\}^n$.
If $f(0^n)=0$ but $f(x) \ne 0$ for all nonzero $x \in \{0,1\}^n$, then $g$ computes the OR function over $\{0,1\}^n$. Hence, $\deg(g) \ge n$, which leads to 
a contradiction whenever $n>(p^k-1)d$.
\end{proof}

We next extend Claim~\ref{clm:singlepoly} to a find a common root for a number of polynomials.

\begin{clm}\label{clm:manypoly}
Let $f_1,\ldots f_{r}:\F_p^n \to \T$ be polynomials of degree $d$ and depth $\le k-1$ such that $f_i(0)=0$ for all $i \in [r]$.
If $n > (p^k-1) d r$  then there exists $x \in \{0,1\}^n \setminus 0^n$ such that $f_i(x)=0$ for all $i \in [r]$.
\end{clm}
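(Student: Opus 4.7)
The plan is to adapt the strategy of Claim~\ref{clm:singlepoly} to the multi-polynomial setting by combining all $r$ nonclassical polynomials into a single classical polynomial $g : \F_p^n \to \F_p$ with a controlled degree bound, and then to derive a contradiction from the fact that $g$ would compute the OR function on $\{0,1\}^n$ if the claim were to fail. Since the OR function has multilinear degree exactly $n$, any upper bound on $\deg(g)$ strictly less than $n$ rules out this possibility.

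Concretely, I would revisit the proof of Lemma~\ref{lemma:nonclassical_to_classical} and, for each $f_i$, extract the classical ``digit'' polynomials $R_0^{(i)}, \dots, R_{k-1}^{(i)} : \F_p^n \to \F_p$ satisfying
$$
f_i(x) \;=\; \frac{1}{p^k}\sum_{j=0}^{k-1} R_j^{(i)}(x)\, p^j \pmod{1} \qquad \forall x \in \{0,1\}^n,
$$
with $\deg(R_j^{(i)}) \le p^j d$. Let $\varphi : \F_p^{rk} \to \F_p$ be the function which is $0$ on the all-zero input and $1$ elsewhere, and let $Q$ be its unique representation as a polynomial in which every variable is raised to degree at most $p-1$. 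Define
$$
g(x) \;=\; Q\bigl(R_0^{(1)}(x),\, \dots,\, R_{k-1}^{(r)}(x)\bigr).
$$
By construction, on $\{0,1\}^n$ we have $g(x) = 0$ iff $f_1(x) = \dots = f_r(x) = 0$, and $g(x) = 1$ otherwise. The same substitution argument used in Lemma~\ref{lemma:nonclassical_to_classical} then yields
$$
\deg(g) \;\le\; (p-1)\sum_{i=1}^{r}\sum_{j=0}^{k-1}\deg(R_j^{(i)}) \;\le\; (p-1)\, r\, d\cdot\frac{p^k-1}{p-1} \;=\; rd(p^k-1).
$$

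If the claim were false, $g$ would agree with the OR function on the entire hypercube. Since the multilinearization of $g$ must equal the unique multilinear extension of OR, which has degree $n$, we would get $\deg(g) \ge n$, contradicting the hypothesis $n > rd(p^k-1)$. The main obstacle is conceptual rather than technical: we must be sure that packaging the digits of all $r$ polynomials into a single outer polynomial $Q$ of degree at most $p-1$ in each coordinate does not incur a blow-up worse than linear in $r$. This is exactly where the bound $\deg(g) \le (p-1)\sum_{i,j}\deg(R_j^{(i)})$ is essential: it leverages the fact that raising each of the $rk$ coordinate variables to degree at most $p-1$ keeps the cost additive in the digits, giving the target bound $rd(p^k-1)$ rather than something exponential in $r$.
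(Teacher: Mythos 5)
Your proposal is correct and follows essentially the same route as the paper: reduce to a single classical polynomial $g$ that would compute the OR function on $\{0,1\}^n$ if the claim failed, and contradict the fact that OR has multilinear degree $n$. The only difference is in how $g$ is assembled --- the paper first builds, for each $f_i$, a $\{0,1\}$-valued indicator $g_i$ of degree at most $(p^k-1)d$ via Lemma~\ref{lemma:nonclassical_to_classical} and then sets $g=1-\prod_{i=1}^r(1-g_i)$, whereas you apply a single outer polynomial $Q$ of individual degree at most $p-1$ directly to all $rk$ digit polynomials $R_j^{(i)}$; both compositions yield the identical bound $\deg(g)\le (p^k-1)dr$.
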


\begin{proof}
We construct an interpolating polynomial for $f_1,\ldots,f_{r}$. Following the proof of Claim~\ref{clm:singlepoly}, for each $f_i$ there exists a classical polynomial $g_i:\F_p^n \to \F_p$ satisfying the following. For any $x \in \{0,1\}^n$, if $f_i(x)=0$ then $g_i(x)=0$, and if $f_i(x) \ne 0$ then $g_i(x)=1$. Moreover, $\deg(g_i) \leq (p^k-1) d$. Define $g:\F_p^n \to \F_p$ as $$g(x)=1-\prod_{i=1}^r(1-g_i(x)).$$
Note that $\deg(g) \le \sum \deg(g_i) \le (p^k-1) d r$. Suppose for contradiction that for every $x \in \{0,1\}^n \setminus 0^n$ there is an $i \in [r]$ such that $f_i(x) \neq 0$. Then $g(0)=0$ as $f_i(0)=0$ for all $i \in [r]$, but $g(x)=1$ for all
$x \in \{0,1\}^n \setminus 0^n$. Then $g$ computes the OR function over $\{0,1\}^n$, and hence $\deg(g) \ge n$. This leads to a contradiction whenever $n > (p^k-1) d r$.
\end{proof}

Next, we argue that the hamming ball of radius $d$ is an interpolating set for polynomials of degree $d$ over $\{0,1\}^n$. In the following, let $B(n,d) = \{x \in \{0,1\}^n: \sum x_i \le d\}$.

\begin{clm}\label{clm:interpolate}
Let $f:\F_p^n \to \T$ be a polynomial of degree $d$ such that $f(x)=0$ for all $x \in B(n,d)$. Then $f(x)=0$ for all $x \in \{0,1\}^n$.
\end{clm}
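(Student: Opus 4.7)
The plan is to proceed by induction on the Hamming weight $|x| = \sum_i |x_i|$, showing that $f(x)=0$ for every $x \in \{0,1\}^n$ with $|x|\le w$, for all $w\ge d$. The base case $w=d$ is exactly the hypothesis that $f$ vanishes on $B(n,d)$.

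For the inductive step, suppose $f$ vanishes on all boolean points of weight at most $k$, where $k\ge d$, and fix $x\in\{0,1\}^n$ with $|x|=k+1$. Pick any $d+1$ distinct indices $i_1,\ldots,i_{d+1}$ from the support of $x$, and let $x^\ast = x - \sum_{j=1}^{d+1} e_{i_j}$, so that $x^\ast \in \{0,1\}^n$ has weight $k-d$ and vanishes at the coordinates $i_1,\ldots,i_{d+1}$. Because $f$ is a nonclassical polynomial of degree $d$, the iterated derivative in the directions $e_{i_1},\ldots,e_{i_{d+1}}$ annihilates $f$, and expanding out gives the identity in $\T$
$$
0 = D_{e_{i_1}}\cdots D_{e_{i_{d+1}}} f(x^\ast) = \sum_{T\subseteq [d+1]} (-1)^{(d+1)-|T|}\, f\Big(x^\ast + \sum_{j\in T} e_{i_j}\Big).
$$
The point $x^\ast+\sum_{j\in T} e_{i_j}$ lies in $\{0,1\}^n$ (since $x^\ast$ is zero on the coordinates being flipped) and has Hamming weight $(k-d)+|T|$. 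The term $T=[d+1]$ contributes exactly $f(x)$ with coefficient $+1$, while every other term has $|T|\le d$ and hence evaluates $f$ at a boolean point of weight at most $k$; by the inductive hypothesis all such terms vanish. Therefore $f(x)=0$, completing the induction.

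I do not anticipate a serious obstacle: the only things to check are that the iterated directional-derivative identity expands with the claimed alternating-sign formula (a routine unwinding of $D_h f(x)=f(x+h)-f(x)$), and that the auxiliary points $x^\ast+\sum_{j\in T}e_{i_j}$ remain in $\{0,1\}^n$, which is immediate from the choice of $x^\ast$. The argument is uniform in $p$ and does not use any special structure of nonclassical polynomials beyond the definition via vanishing $(d+1)$-fold derivatives, so the same proof covers the classical case as well.
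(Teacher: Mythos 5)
Your proof is correct and is essentially the same as the paper's: both expand the vanishing $(d+1)$-fold derivative along unit-vector directions chosen from the support of a heavy point, observing that all terms other than $f(x)$ are evaluations at boolean points of strictly smaller Hamming weight. The paper packages this as a minimal-counterexample argument rather than an explicit induction on weight, but the two are interchangeable.
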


\begin{proof}
Towards contradiction, let $x^* \in \{0,1\}^n$ be a point such that $f(x^*) \ne 0$, with a minimal hamming weight. By assumption, the hamming weight of $x^*$ is at least $d+1$.
Let $i_1,\ldots,i_{d+1} \in [n]$ be distinct coordinates such that $x^*_{i_1}=\ldots=x^*_{i_{d+1}}=1$. Let $e_j \in \{0,1\}^n$ be the $j$-th unit vector, defined as $(e_j)_j=1$ and $(e_j)_{j'}=0$ for $j'  \ne j$. Define vectors $h_1,\ldots,h_{d+1} \in \F_p^n$ by $h_j = -e_{i_j}$.
Since $f$ is a degree $d$ polynomial, we have
$$
D_{h_1} \ldots D_{h_{d+1}} f \equiv 0.
$$
Evaluating this on $x^*$ gives
$$
\sum_{I \subset \{i_1,\ldots,i_{d+1}\}} (-1)^{|I|} f(x^* - \sum_{i \in I} e_i)=0.
$$
However, as we chose $x^*$ with minimal hamming weight such that $f(x^*) \ne 0$, we have $f(x^* - \sum_{i \in I} e_i)=0$ for all nonempty $I$. Hence also $f(x^*)=0$.
\end{proof}

Next, we prove that low degree polynomials must be zero on a large combinatorial box. In the following, we identify subsets $S \subset [n]$ with their indicator in $\{0,1\}^n$.
\begin{lem}\label{lem:lowerboundlem}
Let $f:\F_p^n \to \T$ be a polynomial of degree $d$ and depth $\le k-1$ such that $f(0)=0$. For $\ell \ge 1$, if
$n \ge 2dp^k\ell^{d+1}$
then there exist pairwise disjoint and nonempty sets of variables $S_1,\ldots ,S_{\ell} \subset [n]$ such that
$$
f\left(\sum_{i=1}^{\ell} y_i S_i\right)=0 \qquad \forall y \in \{0,1\}^{\ell}.
$$
\end{lem}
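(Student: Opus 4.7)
The plan is to construct $S_1, \ldots, S_\ell$ one at a time, maintaining after step $j$ the invariant that $f\bigl(\sum_{i \in T} \mathbf{1}_{S_i}\bigr) = 0$ for every $T \subseteq [j]$ with $|T| \le d$. The reason this partial-looking invariant already suffices is Claim~\ref{clm:interpolate}: the function $y \mapsto f(\sum_{i=1}^{j} y_i \mathbf{1}_{S_i})$ is a nonclassical polynomial of degree $\le d$ and depth $\le k-1$ on $\F_p^j$, so its vanishing on the Hamming ball $B(j, d)$ upgrades to vanishing on the whole cube $\{0,1\}^j$. In particular, applying this at the end (with $j = \ell$) yields exactly the conclusion of the lemma.

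For the inductive step, suppose $S_1, \ldots, S_j$ satisfy the invariant and set $N_j := [n] \setminus \bigcup_{i \le j} S_i$. The new invariant conditions that do not involve $j+1$ already hold; the remaining ones say that, for each $T' \subseteq [j]$ with $|T'| \le d-1$, the polynomial $g_{T'} : \F_p^{N_j} \to \T$ defined by $g_{T'}(z) := f\bigl(\sum_{i \in T'} \mathbf{1}_{S_i} + z\bigr)$ vanishes at $z = \mathbf{1}_{S_{j+1}}$. Each $g_{T'}$ has degree $\le d$, depth $\le k-1$, and satisfies $g_{T'}(0) = 0$ by the induction hypothesis, and the number of constraints is $r := |B(j, d-1)| \le \ell^d$ (a loose but sufficient bound).

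The natural move is to invoke Claim~\ref{clm:manypoly} to produce a common zero of the $g_{T'}$'s, but the claim by itself only yields some $z \in \{0,1\}^{N_j}\setminus\{0\}$ whose Hamming weight could be as large as $|N_j|$ itself, potentially exhausting the available variables after a single step. The fix, which I expect to be the main subtlety, is to truncate \emph{before} applying the claim: set $M := 2dp^k \ell^d$ and restrict each $g_{T'}$ to the first $M$ coordinates of $N_j$ (fixing the rest to zero). Each restriction is still a polynomial of degree $\le d$ and depth $\le k-1$ vanishing at $0$, and since $M > (p^k - 1) d r$, Claim~\ref{clm:manypoly} on this smaller universe produces a nonzero common zero supported on those $M$ coordinates. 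Taking $S_{j+1}$ to be its support forces $|S_{j+1}| \le M$.

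Finally, I would verify that the induction is well-founded: we need $|N_j| \ge M$ at every step, i.e., $\sum_{i \le j} |S_i| \le n - M$. Since each $|S_i| \le M$ and $j \le \ell - 1$, this reduces to $n \ge \ell M = 2 d p^k \ell^{d+1}$, which is exactly the hypothesis. After $\ell$ steps the invariant gives $f(\sum y_i \mathbf{1}_{S_i}) = 0$ for all $y \in B(\ell, d)$, and one last application of Claim~\ref{clm:interpolate} propagates this to all $y \in \{0,1\}^\ell$, completing the proof.
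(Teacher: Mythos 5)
Your proposal is correct and follows essentially the same route as the paper's proof: a greedy construction of the $S_i$ that restricts to a fresh block of $O(dp^k\ell^d)$ unused variables, finds a common root of the polynomials indexed by low-weight prefixes via Claim~\ref{clm:manypoly}, and upgrades vanishing on the Hamming ball to the full cube via Claim~\ref{clm:interpolate}, with the identical budget $n \ge 2dp^k\ell^{d+1}$. The only cosmetic differences are that the paper re-applies Claim~\ref{clm:interpolate} at every step to maintain the full-cube invariant (indexing by $y \in B(j,d)$ rather than your $|T'|\le d-1$), whereas you defer the interpolation to a single application at the end, and that the paper pre-allocates blocks $A_{j+1}$ of growing size $2dj^dp^k$ rather than truncating to a uniform $M$.
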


\begin{proof}
Fix $a_1,\ldots,a_{\ell}$ to be determined later such that $n \ge a_1+\ldots+a_{\ell}$. Let $A_1,\ldots,A_{\ell} \subset [n]$ be disjoint subsets of variables of size $|A_i|=a_i$. We will
find subsets $S_i \subset A_i$ such that $f(\sum y_i S_i)=0$ for all $y \in \{0,1\}^{\ell}$. As we may set the variables outside $A_1,\ldots,A_{\ell}$ to zero, we assume from now on
that $n=a_1+\ldots+a_{\ell}$.

First, set $a_1=p^k d$.  Consider the restriction of $f$ to $A_1$ by setting the remaining variables to zero. By Claim~\ref{clm:singlepoly}, there exists a nonempty set $S_1 \subset A_1$ such that $f(S_1)=0$.

Next, suppose that we already constructed $S_1 \subset A_1,\ldots,S_j \subset A_j$ for some $1 \le j<\ell$, such that $f(\sum y_i S_i)=0$ for all $y \in \{0,1\}^j$. For each $y \in \{0,1\}^j$, define a polynomial $f_y:\F_p^{A_{j+1}} \to \T$ by
$$
f_y(x') = f\left(\sum_{i=1}^j y_i S_i + x'\right)
$$
where $x'  \in \F_p^{A_{j+1}}$ denotes the variables in $A_{j+1}$. We will find a common nonzero root for ${f_y(x')}$.

First, consider only $y \in B(j,d)$. The number of such polynomials is $r = {j \choose \le d} = \sum_{i=0}^d {j \choose i}$.
Applying claim~\ref{clm:manypoly}, we have that if we choose $a_{j+1} \ge drp^k$ then there exists $S_{j+1} \subset A_{j+1}$ such that
$$
f_y\left(S_{j+1}\right) = 0 \qquad \forall y \in B(j,d).
$$
We claim that this implies that $f_y(S_{j+1})=0$ for all $y \in \{0,1\}^j$. To see that, define $g:\F_p^j \to \T$ by
$$
g(y) = f\left(\sum_{i=1}^j y_i S_i + S_{j+1}\right).
$$
This a polynomial of degree $d$, and by Claim~\ref{clm:interpolate}, if it is zero for all $y \in B(j,d)$, then it is the zero on all $\{0,1\}^d$. Hence, we have that $f(\sum_{i=1}^{j+1} y_i S_i)=0$ for all $y \in \{0,1\}^{j+1}$.

We now calculate the parameters. We have ${j \choose \le d} \le 2 j^d$, and hence it suffices to take $a_{j+1} = 2dj^dp^k$. Hence, we need $n \ge n_0$ for
$$
n_0 = \sum_{j=1}^{\ell} a_j  \leq 2dp^k \sum_{j=1}^{\ell} j^{d} \le 2dp^k\ell^{d+1}.
$$
\end{proof}

We are now ready to prove Theorem~\ref{thm:weakor}.

\begin{proof}[Proof of Theorem~\ref{thm:weakor}]
Let $p_1,\ldots,p_r$ be distinct primes, and let $p=\max(p_1,\ldots,p_r)$. Let $f_i:\F_{p_i}^n \to \T$ be polynomials of degree at most $d$ and depth at most $k-1$ which weakly represent the OR function. We fix integers $n \geq \ell_0=n_0 \geq \ell_1 \ldots \geq \ell_{r-1} \geq \ell_r=1$ which will be specified later. Applying Lemma~\ref{lem:lowerboundlem} to $f_1$ with parameter $\ell_1$, we get that as long as $n$ is large enough, we can find disjoint nonempty subsets $S_{1,1},\ldots,S_{1,\ell_1} \subset [n]$ such that $f_1(\sum y_i S_{1,i})=0$ for all $y \in \{0,1\}^{\ell_1}$.

Next, consider the restriction of $f_2$ to the combinatorial cube formed by $\{S_{1,i}\}$. That is, define $f'_2:\F_p^{\ell_1} \to \T$ by $f'_2(y) = f_2(\sum y_i S_{1,i})$. Note that $f'_2$ is a polynomial of degree at most $d$ and depth at most $k-1$. Applying Lemma~\ref{lem:lowerboundlem} to $f'_2$ with parameter $\ell_2$, we get that as long as $\ell_1$ is large enough, we can find disjoint nonempty subsets $S'_{2,1},\ldots,S'_{2,\ell_2} \subset [\ell_1]$ such that $f'_2(\sum y_i S'_{2,i})=0$ for all $y \in \{0,1\}^{\ell_2}$. Define $S_{2,1},\ldots,S_{2,\ell_2} \subset [n]$ by $S_{2,i} = \cup_{j \in S'_{2,i}} S_{1,j}$. Then $S_{2,1},\ldots,S_{2,\ell_2}$ are disjoint nonempty subsets of $[n]$, such that
$$
f_1\left(\sum_{i=1}^{\ell_2} y_i S_{2,i}\right) = f_2\left(\sum_{i=1}^{\ell_2} y_i S_{2,i}\right) = 0 \qquad \forall y \in \{0,1\}^{\ell_2}.
$$

Continuing in this fashion, we ultimately find disjoint nonempty subsets $S_{r,1},\ldots,S_{r,\ell_r} \subset [n]$ such that
$$
f_1\left(\sum_{i=1}^{\ell_r} y_i S_{r,i}\right) = \ldots = f_r\left(\sum_{i=1}^{\ell_r} y_i S_{r,i}\right) = 0 \qquad \forall y \in \{0,1\}^{\ell_r}.
$$
In particular, $f_1,\ldots,f_r$ cannot weakly represent the OR function. This argument requires that for each $0 \leq i \leq r-1$, $\ell_{i} \ge 2dp^k\ell_{i+1}^{d+1}$, which can be
satisfied if
$$
n \ge n_0 = (2dp^k)^{(d+1)^{r-1}}.
$$
Now, $k \le d/(p-1) + 1$ and hence $p^k \le p^{d/(p-1)+1} \le 2^d p$. As we can trivially bound $2d \le 2^{d}$ we obtain the simplified bound
$$
n_0 \le 2^{4(d+1)^r \cdot \log p}.
$$
Thus, if $f_1,\ldots,f_r$ do weakly represent the OR function, at least one of the must have degree $d \ge \Omega((\log_p{n})^{1/r})$.
\end{proof}

\bibliographystyle{alpha}
\bibliography{correlation}

\appendix

\section{Improved weak OR representation by classical polynomials}\label{app:or}

In this section, we construct a low degree polynomial over $\Z_m$ that weakly represents the OR function. Recall that the task is to construct a polynomial $P$ in $\Z_m[x_1,\ldots ,x_n]$ such that $P(0)=0$ and $P(x) \neq 0$ for any nonzero $x \in \{0,1\}^n$. Let $m=p_1,\ldots ,p_r$ for pairwise distinct primes $p_i$. Let $\ell(m)$ be the largest prime divisor of $m$. As mentioned before, the best result is due to Barrington, Beigel and Rudich \cite{BBR}, who constructed a symmetric polynomial of degree $O\l(\ell(m)n^{1/r}\r)$ that weakly represents the OR function. It is also well known \cite{BBR}, by Lucas' theorem that for symmetric functions, $d=\Omega\l(\ell(m)^{-1}n^{1/r}\r)$.

Our construction takes us closer to the lower bound. We construct symmetric polynomials that have modulus independent degree, that is, $d=O\l(n^{1/r}\r)$.

\begin{thm}Let $m=\prod_{i=1}^rp_i$ for pairwise distinct primes $p_i$. Then there exists an explicit polynomial $P \in \Z_m[x_1,\ldots ,x_n]$ of degree at most $2\lceil n^{1/r}\rceil$ such that $P$ weakly represents $OR$ modulo $m$.  \end{thm}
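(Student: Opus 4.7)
The strategy is to combine the Chinese Remainder Theorem with Lucas' theorem. Set $s = \lceil n^{1/r}\rceil$. By CRT it suffices, for each prime $p_i$ dividing $m$, to construct a symmetric polynomial $F_i \in \F_{p_i}[x_1, \ldots, x_n]$ of degree at most $2s$ with $F_i(0^n) = 0$, such that the joint zero set $\bigcap_i F_i^{-1}(0)$ on $\{0,1\}^n$ equals $\{0^n\}$. One then forms $P(x) = \sum_i \beta_i \tilde F_i(x)$, where $\tilde F_i \in \Z[x_1,\ldots,x_n]$ lifts $F_i$ and $\beta_i \in \Z_m$ satisfies $\beta_i \equiv 1 \pmod{p_i}$, $\beta_i \equiv 0 \pmod{p_j}$ for $j \neq i$; then $P(x) \equiv F_i(x) \pmod{p_i}$ for each $i$, so $P(x) \equiv 0 \pmod m$ iff every $F_i(x) \equiv 0 \pmod{p_i}$.

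For each $F_i$, the key tool is Lucas' theorem, which gives $\binom{|x|}{p_i^j} \equiv (|x|)_j \pmod{p_i}$, where $(|x|)_j$ is the $j$-th base-$p_i$ digit of $|x|$. Let $K_i$ be the largest integer with $p_i^{K_i} \leq 2s$. Since $\binom{|x|}{p_i^j}$ has degree $p_i^j \leq 2s$ for $j \leq K_i$, the polynomial $F_i$ can access the base-$p_i$ digits of $|x|$ at positions $0,1,\ldots,K_i$ via polynomials of degree at most $2s$; equivalently, $F_i$ is a function of $|x| \bmod p_i^{K_i+1}$, living in the span of $\{\binom{|x|}{k}\}_{k \leq 2s}$.

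The core construction designs $F_i$ so that its zero set viewed on $\F_{p_i^{K_i+1}}$ equals $\{0\}$. A representative construction uses anisotropic forms in the digit variables, for example $F_i = c\,\binom{|x|}{p_i^{K_i}}^{2} + Q(\binom{|x|}{1},\ldots,\binom{|x|}{p_i^{K_i-1}}) \pmod{p_i}$, with $c \in \F_{p_i}$ a nonsquare and $Q$ an auxiliary polynomial of weighted degree within budget; the quadratic top-digit term cannot be cancelled by any lower-degree combination, forcing the zero set to be $\{0\}$. Given this property, the zero set of $F_i$ on $\{0,\ldots,n\}$ is $\{|x| : p_i^{K_i+1} \mid |x|\}$, and since the moduli $p_i^{K_i+1}$ are pairwise coprime, the joint zero set is $\{|x| : \prod_i p_i^{K_i+1} \mid |x|\}$. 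By maximality of $K_i$ we have $p_i^{K_i+1} > 2s$, so $\prod_i p_i^{K_i+1} > (2s)^r \geq (2\lceil n^{1/r}\rceil)^r \geq 2^r n > n$; hence the only $|x| \in \{0,\ldots,n\}$ satisfying this is $|x|=0$, forcing $x = 0^n$.

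The main obstacle is the per-prime construction of $F_i$ with zero set exactly $\{0\}$ in $\F_{p_i^{K_i+1}}$ under the degree budget $2s$. This is most delicate when $p_i^{K_i} > s$ (so that only linear dependence on the top digit $\binom{|x|}{p_i^{K_i}}$ fits within the allowed degree), in which case the pure anisotropic-form trick no longer works. One must then either choose a more elaborate combination of digit polynomials, or permit a slightly larger $Z_i \supsetneq \{0\}$ in $\F_{p_i^{K_i+1}}$ and invoke a joint CRT covering argument across all primes to rule out any common zero in $\{1,\ldots,n\}$, using the slack $\prod_i p_i^{K_i+1} > 2^r n$ to ensure the ``extra'' residues in the CRT product lift outside $\{1,\ldots,n\}$.
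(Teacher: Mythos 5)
Your skeleton (CRT reduction to one symmetric polynomial per prime, Lucas' theorem to read off base-$p_i$ digits of $|x|$ via $S_{p_i^j}$, and an ``anisotropic form'' to force joint vanishing of the digits) matches the paper's, but the step you yourself flag as ``the main obstacle'' --- constructing $F_i$ whose zero set on the digit vector is exactly $\{0\}$ --- is exactly the crux, and your sketched gadget does not work. You propose $F_i = c\,S_{p_i^{K_i}}(x)^2 + Q(S_1,\ldots,S_{p_i^{K_i-1}})$ with $Q$ quadratic-like in the lower digits and claim the top term ``cannot be cancelled.'' By Chevalley--Warning, every quadratic form in three or more variables over $\F_{p_i}$ has a nontrivial zero, so no choice of quadratic $Q$ in the $K_i$ lower digits can make the combined form vanish only when all $K_i+1$ digits vanish (anisotropic quadratic forms over $\F_p$ exist only in dimension $\le 2$). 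Your fallback of tolerating a larger zero set $Z_i$ and running a ``joint CRT covering argument'' is not developed and is not needed.

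The paper's resolution is to reduce to exactly two variables: set $z_{i1}=1-\prod_{j=0}^{e_i-2}\bigl(1-S_{p_i^j}(x)^{p_i-1}\bigr)$, a degree-$(p_i^{e_i-1}-1)$ indicator of whether \emph{any} lower digit is nonzero, and $z_{i2}=S_{p_i^{e_i-1}}(x)$, the top digit itself; then apply the binary anisotropic form $z_{i1}^2-q_iz_{i2}^2$ (or $z_{i1}^2+z_{i1}z_{i2}+z_{i2}^2$ for $p_i=2$), which vanishes iff $z_{i1}=z_{i2}=0$, i.e.\ iff $p_i^{e_i}\mid \wt(x)$. The ``delicate case'' you worry about ($p_i^{K_i}>s$) is also self-inflicted: the paper chooses the top digit position $e_i-1$ to be the largest with $p_i^{e_i-1}\le\lceil n^{1/r}\rceil$, so the quadratic form has degree $2p_i^{e_i-1}\le 2\lceil n^{1/r}\rceil$, while the product of moduli still satisfies $\prod_i p_i^{e_i}>\lceil n^{1/r}\rceil^r\ge n$, which is all the counting argument needs. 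So: right architecture, but the key per-prime gadget is missing and the one you sketch provably fails.
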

\begin{proof}
For each $1 \leq i \leq r$, let $e_i$ be the smallest integer such that $p_i^{e_i}>\l\lceil n^{1/r}\r\rceil$.

\paragraph{The construction.} Let $S_j$ be the $j$-th symmetric polynomial in $x=(x_1,\ldots ,x_n)$. Let $q_i$ be a quadratic non residue in $\Z_{p_i}$ for odd $p_i$.  Define $P \in \Z[x_1,\ldots ,x_n]$ as follows. Let $$P(x)=z_{i1}^2-q_iz_{i2}^2 \mod{p_i}, \text{ for odd } p_i,$$ and $$P(x)=z_{i1}^2+z_{i1}z_{i2}+z_{i2}^2 \mod{p_i}, \text{ for } p_i=2,$$ where $$z_{i1}=1-\prod_{j=0}^{e_i-2}(1-S_{p_i^j}(x)^{p_i-1})$$ and $$z_{i2}=s_{p_i^{e_i-1}}(x).$$ This uniquely defines $P(x) \mod{m}$.

Note that $P(x)=0 \mod{p_i}$ if and only if $z_{i1}=z_{i2}=0 \mod{p_i}$. This follows from the irreducibility of $x^2-q_i$ over $\Z_{p_i}$ for odd $p_i$ and $x^2+x+1$ over $\Z_2$.

If $x=0$, then $z_{1i}=z_{2i}=0 \mod{p_i}$ for all $i$ and hence $P(x)=0 \mod{m}$.

Let $\wt(x):=\sum_{i=1}^n |x_i|$. Now, given $x \neq 0$, we have $\wt(x) \neq 0$. Therefore, $\wt(x) \neq 0 \mod{n+1}$. Thus, there exists $i_0$ such that $\wt(x) \neq 0 \mod{p_{i_0}^{e_{i_0}}}$. From here on, we set $p:=p_{i_0}, e:=e_{i_0}$. Consider the $p$-ary expansion of $\wt(x)$. Let $\wt(x)=\sum_{j=0}^{e-1}a_jp^j+tp^e$, $0 \leq a_i \leq p-1$. Since $\wt(x) \neq 0 \mod{p^e}$, we have for some $j$, $a_j \neq 0$.

We first note that since $x \in \{0,1\}^n$, we have $S_{p^j}(x)={\wt(x) \choose p^j}$. Therefore, by Lucas' theorem, we have $a_j=S_{p^j}(x) \mod{p}$.

Let $z_1=z_{i_01}, z_1=z_{i_02}$. Now, if $a_{e-1}\neq 0$, then $S_{p^{e-1}}(x)=z_2 \neq 0 \mod{p}$ and thus $P(x) \neq 0 \mod{p}$. Therefore, $P(x) \neq 0 \mod{m}$ and we are done. If on the other hand, if any $a_j \neq 0$ ($j \leq e-2$), then $S_{p^j}(x)\neq 0$. Thus, $z_1=1$ and hence $P(x) \neq 0 \mod{p}$. Therefore, $P(x) \neq 0 \mod{m}$.

Finally, we bound the degree of $P(x)$. The degree of each $z_{i1}$ is at most $(p_i-1)\sum_{j=0}^{e-2}p_i^j=p_i^{e_i-1}-1$. The degree of each $z_{i2}$ is $p_i^{e_i-1}$. Therefore the degree of $P(x)$ is $\max_{i}2p_i^{e_i-1}$. (Note that is where we improve on \cite{BBR}. Their upper bound is $p_i^{e_i}$.) Since $e_i$ is the least integer such that $p_i^{e_i}>\l\lceil n^{1/r}\r\rceil$, we have $p_i^{e_i} \leq p_i \l\lceil n^{1/r}\r\rceil$. Therefore, $p_i^{e_i-1} \leq \l\lceil n^{1/r}\r\rceil$ and this proves the theorem.

\end{proof}





\end{document}